
\documentclass[10pt, conference, letterpaper]{IEEEtran}
\IEEEoverridecommandlockouts
\usepackage[draft]{hyperref}
\usepackage[usenames,dvipsnames]{xcolor}


 
\usepackage{fontawesome,bm}
\usepackage{amsthm}
\usepackage{amssymb}
\usepackage{mathtools}
\usepackage{enumitem}

\usepackage{subcaption}

\newtheorem{proposition}{Proposition}
\newtheorem{theorem}{Theorem}
\newtheorem{corollary}{Corollary}[theorem]
\newtheorem{lemma}[theorem]{Lemma}
\newtheorem{definition}{Definition}
\theoremstyle{definition}
\newtheorem{remark}{Remark}

\usepackage{cite}

\usepackage{xspace}
\DeclareMathOperator{\interior}{int}

\newcommand{\EFRT}[1]{\gamma_{#1}^\star}

\newcommand{\ADVR}{\tilde{\mu}^\star}

\newcommand{\ifgraph}{G_I} 
\newcommand{\nstar} { \bar{n} }
\newcommand{\subhrm}[1]{ C_{#1}(t)}
\newcommand{\MPALGONAME}{\textsc{FAMIX-MS}\xspace} 
\newcommand{\ALGONAME}{\textsc{FAMIX-ND}\xspace} 
\newcommand{\LDF}{\textsc{LDF}\xspace} 
\newcommand{\GMS}{\textsc{GMS}\xspace} 
\newcommand{\DISTGREEDY}{\textsc{DistGreedy}\xspace} 
\DeclarePairedDelimiter{\ceil}{\lceil}{\rceil}

\newcommand{\MWS}{\textsc{MWS}\xspace}
\newcommand{\maxpacks}{a_{max}}
\newcommand{\NONEMPTY}{\mathcal{B}(t)}

\newcommand{\strat}{\mathcal{P}_{NC}(\mathcal{F})}
\newcommand{\bcdot}{\boldsymbol{\cdot}}

\newcommand{\be}{\begin{eqnarray}}
\newcommand{\ve}[1]{\mathbf{#1}}
\newcommand{\ee}{\end{eqnarray}}
\newcommand{\ben}{\begin{eqnarray*}}
\newcommand{\een}{\end{eqnarray*}}
\newcommand{\bfl}{\begin{flalign*}}
\newcommand{\efl}{\end{flalign*}}

\newcommand{\cpolicy}{\mathcal P_C}

\newcommand{\MAXDEAD}{d_{\max}} 
\newcommand{\USERSC}{\MATH{K}}
\newcommand{\USERS}{\MATH{\mathcal K}}
 
\newcommand{\STATEZ}{\mathcal{S}(t_0)} 
 
\newcommand{\STATET}{\mathcal{S}(t)} 
 
\newcommand{\TSTATE}[1]{\MATH{\bm{\tau}(#1)}}
\newcommand{\BSTATEE}{\Psi}
\newcommand{\BSTATE}[1]{
\ifx&#1&%
   \MATH{\Psi}
\else
   \MATH{\Psi(#1)}
\fi
}
\newcommand{\BSTATEP}[2]{\MATH{\Psi^{#2}(#1)}}
\newcommand{\TFSTATE}[1]{
\ifx&#1&%
   \MATH{\mathbf z}
\else
   \MATH{\mathbf z(#1)}
\fi
}
\newcommand{\TFSTATETUPLE}[1]{(\FSTATE{#1}, \TSTATE{#1})}
\newcommand{\FSTATE}[1]{\MATH{\mathbf{q}(#1)}}
\newcommand{\WSTATE}[1]{\MATH{\mathbf w(#1)}}
\newcommand{\WSTATEP}[2]{\MATH{\mathbf w^{#2}(#1)}}
\newcommand{\STATETUPLE}[1]{(\FSTATE{#1},\TSTATE{#1}, \BSTATE{#1}, \WSTATE{#1})} 
\newcommand{\STATETUPLEP}[2]{(\BSTATEP{#1}{#2}, \FSTATE{#1},\TSTATE{#1}, \WSTATEP{#1}{#2})}

\newcommand{\sumt}{{\sum_{t\in \mathcal F}}} 
\newcommand{\suml}{{\sum_{l\in \mathcal K}}} 
\newcommand{\sumM}[1]{{\sum_{l\in \M{#1}}}}

\DeclareMathOperator*{\argmax}{arg\,max}

\newcommand{\PA}{\textsc{p}}
\newcommand{\cb}[1]{\left\{#1\right\}}
\newcommand{\MIN}[2]{\min_{#1} \cb{#2}}
\newcommand{\MAX}[2]{\max_{#1} \cb{#2}}

\newcommand{\MATH}[1]{\ensuremath{#1}\xspace}
\newcommand{\br}[1]{\left[#1\right]}

\newcommand{\ETT}[1]{\mathbb E_{t,J}\br{#1}}
\newcommand{\EOT}[1]{\mathbb E_{t}\br{#1}}

\newcommand{\ES}{\mathbb E}
\newcommand{\FADRAN}{\textsc{F}}
\newcommand{\ALGRAN}{\textsc{R}}
\newcommand{\E}[1]{\mathbb E \br{#1}}
\newcommand{\ERF}[1]{\mathbb E^{\ALGRAN,\FADRAN} \br{#1}}
\newcommand{\ERFTJ}[1]{\mathbb E^{\ALGRAN,\FADRAN}_{t,J} \br{#1}}
\newcommand{\ERFT}[1]{\mathbb E^{\ALGRAN,\FADRAN}_{t} \br{#1}}
\newcommand{\EF}[1]{\mathbb E^{\FADRAN} \br{#1}}
\newcommand{\OPT}{\MATH{\mu^\star\xspace}}
\newcommand{\ALG}{\textsc{ALG}\xspace}

\newcommand{\GM}[1]{\mathcal G_{\M{#1}}}
\newcommand{\GADV}{\mathcal G_{\OPT}}
\newcommand{\GADVND}{\mathcal G_{\tilde \mu^\star}}
\newcommand{\GADVA}{\hat {\mathcal G}_{\OPT}}
\newcommand{\GADVAND}{\hat {\mathcal G}_{\tilde \mu^\star}}
\newcommand{\GADVE}[1]{\GADVA^{(#1)}(t)}
\newcommand{\GADVEND}[1]{\GADVAND^{(#1)}(t)}
\newcommand{\GADVEM}[1]{\GADVA^{(\M{#1})}(t)}
\newcommand{\GALG}{\mathcal G_{ALG}}
\newcommand{\PAT}{J}
\newcommand{\FRAME}{\MATH{\mathcal F}}
\newcommand{\frm}{\FRAME}
\newcommand{\FRAMEK}{\MATH{\mathcal F^{k}}}

\newcommand{\BUFFERT}[1]{\MATH{\mathcal B(#1)}}
\newcommand{\BUFFER}{\MATH{\BUFFERT{t}}}
\newcommand{\AMISE}{\MATH{\mathcal I}}
\newcommand{\AMIS}{\MATH{\mathcal M}}
\newcommand{\CAMIS}{\MATH{|\AMIS|}}
\newcommand{\CAMISE}{\MATH{|\AMISE|}}
\newcommand{\EPS}{\mathcal E}
\newcommand{\EPSC}{\mathcal E_0}
\newcommand{\EPSM}{\mathcal E_m}
\newcommand{\EPSG}{\mathcal E_F}
\newcommand{\W}[1]{W_{\M{#1}}(t)}
\newcommand{\WE}[1]{ \sumM{#1} w_l(t) \FP_l(t) }
\newcommand{\M}[1]{M^{#1}} 
\newcommand{\MT}[1]{\M{#1}(t)} 
\newcommand{\PI}{\pi}

\newcommand{\FP}{\MATH{q}}
\newcommand{\PHI}{\phi}
\newcommand{\DR}{\MATH{p}}
\newcommand{\DRC}{\MATH{\DR_l}}

\usepackage{dsfont,bm}

\newcommand{\GRATIO}{\frac{\CAMISE}{2 \CAMISE -1}}
\newcommand{\NID}[1]{\|\WSTATE{#1}\|}
\newcommand{\JFK}{\mathcal J(\mathcal{F}^{(k)})}
\newcommand{\JF}{\mathcal J(\mathcal{F})}

\newcommand{\I}[2][]{I_{#2}^{#1}}
\newcommand{\IS}{M}
\newcommand{\IF}{C}
\newcommand{\ACTIVELINK}[2]{\IF_{#1}(#2)=1}
\newcommand{\INACTIVELINK}[2]{\IF_{#1}(#2)=0}

\newcommand{\ACTIVES}[1]{\{l\in \USERS: \ACTIVELINK{l}{t}\}}
\newcommand{\CHR}{\MATH{\chi}}

\newcommand{\dref}[1]{(\ref{#1})}
\newcommand{\VAL}[3]{
V_{#1}(#3,#2,J)
}
\newcommand{\VALV}[3]{
\widetilde V_{#1}(#3,#2,J)
}

\newcommand{\NEW}[1]{\textcolor{black}{#1}}
\newcommand{\REMOVE}[1]{\textcolor{black}{#1}}

\def\BibTeX{{\rm B\kern-.05em{\sc i\kern-.025em b}\kern-.08em
    T\kern-.1667em\lower.7ex\hbox{E}\kern-.125emX}}

\begin{document}
\title{
Randomized Scheduling of Real-Time Traffic in Wireless Networks Over Fading Channels
}
\author{
	\IEEEauthorblockN{Christos Tsanikidis, Javad Ghaderi}
	\IEEEauthorblockA{Electrical Engineering Department, Columbia University}  
	\thanks{Emails: \{c.tsanikidis, jghaderi\}@columbia.edu.}
	}
\maketitle

\begin{abstract}
Despite the rich literature on scheduling algorithms for wireless networks, algorithms that can provide deadline guarantees on packet delivery for general traffic and interference models are very limited. In this paper, we study the problem of scheduling real-time traffic under a \textit{conflict-graph interference model} with \textit{unreliable links} due to channel fading. Packets that are not successfully delivered within their deadlines are of no value. We consider traffic (packet arrival and deadline) and fading (link reliability) processes that evolve as an \textit{unknown} finite-state Markov chain. The performance metric is efficiency ratio which is the fraction of packets of each link which are delivered within their deadlines compared to that under the optimal (unknown) policy. We first show a conversion result that shows classical non-real-time scheduling algorithms can be ported to the real-time setting and yield a constant efficiency ratio, in particular, Max-Weight Scheduling (\MWS) yields an efficiency ratio of $1/2$. We then propose randomized algorithms that achieve efficiency ratios strictly higher than $1/2$, by carefully randomizing over the maximal schedules. 
We further propose low-complexity and myopic distributed randomized algorithms, and characterize their efficiency ratio. Simulation results are presented that verify that randomized algorithms outperform classical algorithms such as \MWS and \GMS.     
\end{abstract}

\begin{IEEEkeywords}
Scheduling, Real-Time Traffic, Markov Processes, Stability, Wireless Networks
\end{IEEEkeywords}

\section{Introduction}
There has been vast research on scheduling algorithms in wireless networks which mostly focus on maximizing long-term throughput when packets have no strict delay constraints. Max-Weight Scheduling (MWS) policy has been shown to be throughput optimal in such settings, attaining any
desired throughput vector in the feasible throughput region~\cite{tassiulas1990stability}. Further, greedy scheduling policies such as LQF~\cite{joo2009understanding,dimakis2006sufficient}, or distributed policies such as CSMA~~\cite{ghaderi2010design, ni2012q, shah2010delay} have been proposed that alleviate the computational complexity of MWS and achieve a certain fraction of the throughout region.  
However, in many emerging applications, such as Internet of Things (IoT), vehicular networks, and edge computing, delays and deadline guarantees on packet delivery play an important role~\cite{lu2015real, song2008wirelesshart, gubbi2013internet}, as packets that are not received within specific deadlines are of little or no value, and are typically discarded. This \textit{discontinuity} in the packet value as a function of latency makes the problem significantly more challenging than traditional scheduling where packets do not have strict deadlines.

There is an increasing body of work attempting to address the above challenge, however they either assume a frame-based traffic model~\cite{houborkum09,houkum09,kumar10,srikant10,li2013optimal}, relax the interference graph constraints~\cite{singh2018throughput}, or use greedy scheduling approaches like LDF~\cite{kang2014performance,kang2015capacity}. In the frame-based traffic model, time is divided into frames, and packet arrivals and their deadlines during a frame are assumed to be \textit{known} at the beginning of the frame, and deadlines are constrained by the frame's length~\cite{houborkum09,houkum09,kumar10,srikant10,li2013optimal}. Under such assumptions, the optimal solution in each frame is a Max-Weight schedule. 
Note that unless the traffic is restricted to be synchronized across the users, such solutions are non-causal. The optimal scheduling policy (and the real-time throughout region) for general traffic patterns and interference graphs is unknown and very difficult to characterize.  
Largest-Deficit-First (LDF) is a causal policy which extends the well-studied Largest-Queue-First (LQF) from traditional scheduling to real-time scheduling. The performance of LDF has  been studied in terms of  \textit{efficiency  ratio},  which  is  the  fraction  of  the  real-time throughput region  guaranteed  by  LDF. Under \textit{i.i.d.}  packet arrivals and deadlines, with no fading, LDF was shown to achieve an efficiency ratio of at least $\frac 1 {\beta + 1}$~\cite{kang2014performance}, where $\beta$ is the \textit{interference degree} of the network (which is the maximum number of links that can be scheduled simultaneously out of a link and its neighboring links). 

Recently, the work~\cite{tsanikidis2020power} has shown that through randomization it is possible to design algorithms that can significantly improve the prior algorithms, in terms of both efficiency ratio and traffic assumptions. Specifically,~\cite{tsanikidis2020power} proposed two randomized scheduling policies: AMIX-ND for collocated networks with an efficiency ratio of at least $\frac{e-1}{e} \approx 0.63$, and AMIX-MS for general interference graphs with an efficiency ratio of at least $\GRATIO > \frac{1}{2}$ ($|\AMISE|$ is the number of maximal independent sets of the graph). However, the complexity of AMIX-MS can be prohibitive for implementation in large networks. Moreover, intrinsic wireless channel fading has not been considered in~\cite{tsanikidis2020power}, and packet transmission over a link is assumed to be always reliable. 

In this work, we consider an interference graph model of wireless network subject to fading, where packet transmissions over links are unreliable. We consider a joint traffic (packet arrival and deadline) and fading (link's success probability) process that evolves as an \textit{unknown} Markov chain over a finite state space. Can the existing traditional scheduling algorithms (which focus on long-term throughput with no deadline constraints) be used to provide guarantees for scheduling in this setting, \textit{without making frame-based traffic assumptions}? We show that interestingly the answer is yes, but fading and deadlines might significantly degrade their performance.  

\begin{figure} [t]
\centering
    \includegraphics[width=0.48\textwidth]{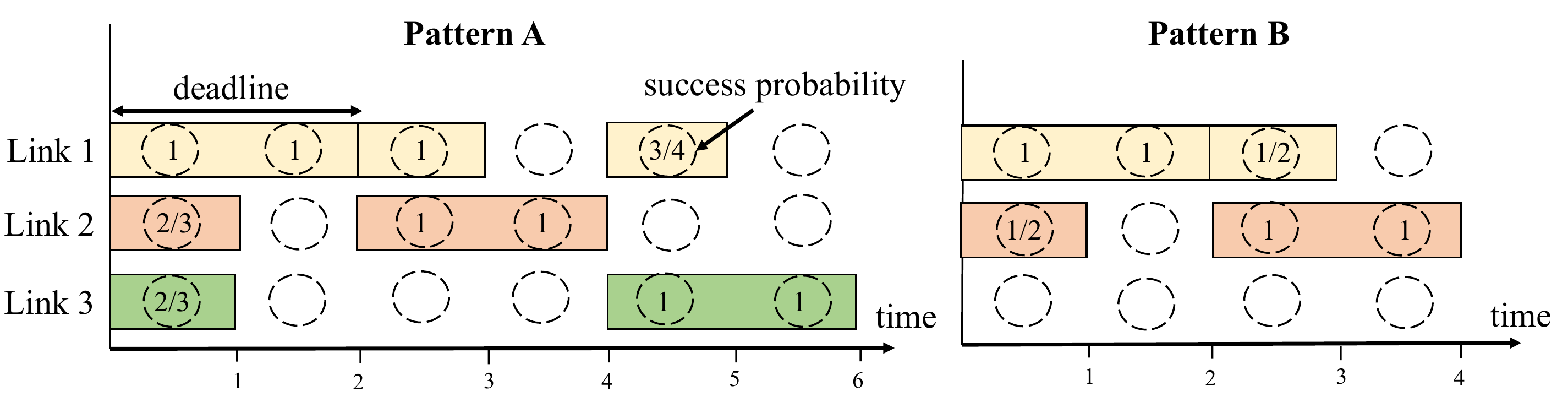}
\caption{An example of a Markovian traffic-and-fading process that alternates between two patterns. Each rectangle indicates a packet for a link. The left side of the rectangle corresponds to its arrival time, and its length corresponds to its deadline. The numbers in circles indicate the channel success probability of each link in each time slot.
}
\vspace{- 0.1 in}
\label{fig-traffic-fading}
\end{figure}
Introducing fading in deadline-constrained scheduling makes the problem very complicated. For example, consider the initial two time slots in Pattern B in traffic-fading process of Figure~\ref{fig-traffic-fading} with two interfering links: link $1$ has a packet with deadline $2$ and link $2$ has a packet with deadline $1$. Link 1's success probability is $1$ in the first two slots, and link 2's success probability is $0.5$ in the first time slot.  Not knowing the future traffic and fading, an opportunistic scheduler would prioritize link $1$ in the first time slot and subsequently in the second time slot, but the optimal policy would always schedule link $2$ in the first time slot and packet of link $1$ in the second time slot. A key insight of our work is that careful randomization in decision is crucial to hedge against the risk of poor decision due to lack of the knowledge of future traffic and fading.

\subsection{Contributions}
Our main contributions can be summarized as follows.
\begin{itemize}[leftmargin=*]
\item \textbf{Application of Traditional Scheduling Algorithms to Real-Time Scheduling.} Each non-empty link (i.e., with unexpired packets for transmission) is associated with a weight which is the product of its deficit counter and channel fading probability at that time, otherwise if the link is empty, the weight is zero. We show that any  algorithm that provides a $\psi$-approximation to Max-Weight Schedule (\MWS) under such weights, achieves
an efficiency ratio of at least $\frac{\psi}{\psi+1}$ for real-time scheduling under any Markov traffic-fading process.  As a consequence,  \MWS policy achieves an efficiency ratio of $\frac 1 2$, and \GMS (Greedy Maximal Scheduling) provides an efficiency ratio of at least $\frac 1 {\beta+1}$.
\item \textbf{Randomized Scheduling of Real-Time Traffic Over Fading Channels.}  We extend~\cite{tsanikidis2020power} to show the power of randomization for scheduling real-traffic traffic over fading channels in general and collocated networks. By carefully randomizing over the maximal schedules, the algorithms can achieve an efficiency ratio of \textit{at least} $\GRATIO >\frac {1}{2}$ in any general graph under any \textit{unknown}  Markov traffic-fading process. In the special case of a collocated network with i.i.d. channel success probability $q$, the algorithm can achieve an efficiency ratio of at least $( \frac{\FP}{1-e^{-\FP}} + 1-\FP)^{-1}$, which ranges from $0.5$ to $0.63$.

\item \textbf{Low-Complexity and Distributed Randomized Algorithms.}
To address the high complexity of the randomized algorithm in general graphs, we propose a low-complexity covering-based randomization and a myopic distributed randomization. Given a coloring of the graph using $\CHR$ colors, we can achieve an efficiency ratio of at least $\frac {1}{2 \CHR - 1}$, by randomizing over $\CHR$ schedules. 
Moreover, we show that a myopic distributed randomization, which is simple and easily implementable, can achieve an efficiency ratio of $\frac{1}{\Delta+1}$ in any graph with maximum degree $\Delta$. 
\end{itemize}
\subsection{Notations} Some of the basic notations used in this paper are as follows. Given $\mathbf y\in \mathbb R^n$,  $||\mathbf y|| = \sum_{i=1}^n |y_i|$. We use $\interior(A)$ to denote the interior of set $A$. $[x]^+=\max\{x,0\}$. $\mathds{1}(E)$ is the indicator function of event $E$. $|A|$ is the cardinality of set $A$. $\ES^{Y}[\cdot]$ is used to indicate that expectation is taken with respect to random variable $Y$. 
\section{Model and Definitions}\label{sec:model}
\textit{Wireless Network and Interference Model.}
Consider a set of \USERSC links, denoted by the set $\USERS$. 
We assume time is divided into slots, and in every slot $t$, each link $l \in \USERS$ can attempt to transmit at most one packet. To model interference between links, we use the standard \textit{interference graph} $G_I=(\USERS, E_I)$: Each vertex of $G_I$ is a link, and there is an edge $(l_1,l_2) \in E_I$ if links $l_1, l_2$ interfere with each other. Hence, no two links that transmit packets can share an edge in $G_I$.
Let $\AMISE$ be the set of all maximal independent sets of graph $G_I$. Also let $\NONEMPTY \subseteq \USERS$ denote the set of nonempty links (i.e., links that have packets available to transmit) at time $t$. We use $\IS(t)$ to denote the set of links scheduled at time $t$. By definition, $\IS(t)$ is a valid \textit{schedule} if links in $\IS(t)$ are nonempty and form an independent set of $G_I$, i.e.,
\be \label{eq:defschedule}
\IS(t) \subseteq (\NONEMPTY \cap D), \text{ for some }D \in \AMISE.
\ee
A schedule is said to be maximal if no nonempty link can be added to the schedule without violating the interference constraints. In this case, `$\subseteq$' in \dref {eq:defschedule} holds with `$=$'.

\textit{Fading Model.} Transmission over a link is unreliable due to wireless channel fading. To capture the channel fading over link $l$, we use an ON-OFF model where link $l$ at time $t$ is ON ($\IF_{l}(t)=1$) with probability $q_l(t)$, otherwise it is OFF ($\IF_{l}(t)=0$). 
If scheduled, transmission over link $l$ at time $t$ is successful only if link $l$ is ON. 
Let $\FSTATE{t}=( q_l(t), l\in \USERS)$ be the vector of success probabilities of the links.
We assume that at any time slot $t$, the link's success probability is known to the scheduler before making a decision. At the end of time slot $t$, link's transmitter receives a feedback from its receiver indicating whether transmission was successful or not.  A special case of this model is when  $q_l(t)\in \{0,1\}$, $\forall l\in \USERS$, in which case the channel state $C_l(t)$ is deterministically known to the scheduler, which requires periodic channel state estimation. Another special case is when $C_l(t)$ is i.i.d. with some probability $q_l$ which eliminates the need for periodic estimation.  Similar models have been used in \cite{hou2013scheduling,singh2018throughput}. 

Overall, we define $I(t)=(\I{l}(t), l\in \USERS)$ to denote the successful packet transmissions over the links at time $t$. Note that by definition, $\I{l}(t)= \mathds{1}(l \in \IS(t)) \IF_{l}(t)$, where $\IS(t)$ is the valid schedule at time $t$, and  $\IF_{l}(t)$ is the channel state of link $l$.

\textit{Traffic Model.}
We assume a single-hop real-time traffic. We use $a_l(t) \leq a_{\text{max}}$ to denote the number of packet arrivals at link $l$ at time $t$, where $a_{\text{max}} < \infty$ is a constant. Each arriving packet has a deadline which indicates the maximum delay that the packet can tolerate before successful transmission.  A
packet with deadline $d$ at time $t$ has to be successfully transmitted
before the end of time slot $t + d - 1$, otherwise it will be discarded. We define the traffic process $\TSTATE{t}=(\tau_{l,d}(t),  d=1,\cdots,\MAXDEAD,l \in \USERS)$, where $\tau_{l,d}(t)$ is the number of packets with deadline $d$ arriving to link $l$ at time $t$, and $\MAXDEAD < \infty$.

\textit{Traffic and Fading Process.} 
In general, we assume that the joint traffic and fading process $\TFSTATE{t} = \TFSTATETUPLE{t}$ evolves as an ``\textit{unknown}'' irreducible Markov chain over a finite state space $\mathcal{Z}$. See Figure~\ref{fig-traffic-fading} for an example of a Markovian traffic-fading process.

Without loss of generality, we make the following assumption to make this Markov chain \textit{non-trivial}: For every link $l$, there are two states ${\TFSTATE{}}^l, {\TFSTATE{}'}^l \in \mathcal{Z}$  such that ${\TFSTATE{}}^l$ has a packet arrival with deadline $d$, and $ {\TFSTATE{}'}^l$ has $q_{l}>0$, and there is a positive probability that $\TFSTATE{t}$ can go from ${\TFSTATE{}}^l$ to ${\TFSTATE{}'}^l$ in at most $d$ time slots. This assumption simply states that it is possible to successfully transmit some packets of every link $l$ within their deadlines. If a link does not satisfy this condition, we can simply remove it from the system. Note that $\TFSTATE{t}$ is an irreducible finite-state  Markov chain, hence it is positive recurrent \cite{dynkin2012theory}, and time-average of any bounded function of $\TFSTATE{t}$ is well defined, in particular the packet arrival rate for link $l$:
\be \label{eq:arrival rate}
&\lim_{t\to \infty} \frac{1}{t}{\sum_{s=1}^{t} a_l(s)}=:\overline{a}_l.
\ee

\textit{Buffer Dynamics.} The buffer of link $l$ at time $t$, denoted by $\Psi_l(t)$, contains the existing packets at link $l$ which have not expired yet and also the newly arrived packets at time $t$. The remaining deadline of each packet in $\Psi_l(t)$ decreases by one at every time slot, until the packet is \textit{successfully} transmitted or reaches the deadline $0$, which in either case the packet is removed from $\Psi_l(t)$.
We also define $\Psi(t)=(\Psi_l(t); l \in \USERS)$.  

\textit{Delivery Requirement and Deficit.} As in \cite{houborkum09,houkum09,kumar10,srikant10,kang2014performance}, we assume that there is a minimum delivery ratio requirement \DRC (QoS requirement) for each link $l\in \USERS$. This means we must successfully deliver at least \DRC fraction of the incoming packets on each link $l$ \textit{within their deadlines}. Formally,
\be \label{eq:delivery ratio}
&\liminf_{t\to \infty} \frac{\sum_{s=1}^tI_l(s)}{\sum_{s=1}^ta_l(s)}\geq \DRC.
\ee

We define a deficit $w_l(t)$ which measures the number of successful packet transmissions owed to link $l$ up to time $t$ to fulfill its minimum delivery ratio. As in \cite{kang2014performance,srikant10, tsanikidis2020power}, the deficit evolves as
\be \label{eq:deficit queue}
& w_l(t+1)=\Big[w_l(t)+\widetilde{a}_l(t)-I_l(t)\Big]^+,
\ee
where $\widetilde{a}_l(t)$ indicates the amount of deficit increase due to packet arrivals. For each packet arrival, we should increase the deficit by \DRC on average. For example, we can increase the deficit by exactly \DRC for each packet arrival to link $l$, or use a coin tossing process as in~\cite{kang2014performance,srikant10}, i.e., each packet arrival at link $l$ increases the deficit by one with the probability \DRC, and zero otherwise. We refer to $\widetilde{a}_l(t)$ as the \textit{deficit arrival} process for link $l$. Note that it holds that
\be \label{eq:deficitarr}
& \lim_{t\to \infty} \frac{1}{t}\sum_{s=1}^{t} {\widetilde{a}_l(s)}=\overline{a}_l \DRC:=\lambda_l, \ l\in \USERS.
\ee
We refer to $\lambda_l$ as the deficit arrival rate of link $l$. Note that an arriving packet is always added to the link's buffer, regardless of whether and how much deficit is added for that packet. Also note that in \dref{eq:deficit queue} each time a packet is transmitted successfully from link $l$, i.e., $I_l(t)=1$, the deficit is reduced by one. The dynamics in~\dref{eq:deficit queue} define a deficit queueing system, with bounded increments/decrements, whose stability, e.g., in the sense
\be \label{eq:strong stability}
& \limsup_{t\to \infty} \frac{1}{t}\sum_{s=1}^t\mathbb{E}[w_l(s)] <\infty,
\ee
implies \dref{eq:delivery ratio} holds~\cite{neely2010queue}. Define the vector of deficits as $\WSTATE{t}=(w_l(t), l \in \USERS)$. The system state at time $t$ is then defined as
\begin{equation}\label{eq:statedef}
\STATET=\STATETUPLE{t}.
\end{equation}
%

\textit{Objective.} Define $\cpolicy$ to be the set of all causal policies, i.e., policies that do not know the information of future arrivals, deadlines, and channel success probabilities in order to make scheduling decisions.
For a given traffic-fading process $\TFSTATE{t}$, with fixed $\overline a_l$, defined in \dref{eq:arrival rate}, we are interested in causal policies that can stabilize the deficit queues for the largest set of delivery rate vectors $\ve{p}=(\DRC, l \in \USERS)$, or equivalently largest set of $\bm{\lambda}=(\lambda_l := \overline a_l \DRC, l \in \USERS)$ possible. 
For a given traffic process, we say the rate vector $\bm{\lambda}=(\lambda_l,l\in \mathcal K)$ is supportable under some policy $\mu \in \cpolicy$ if all the deficit queues remain stable for that policy. Then one can define the supportable real-time rate region of the policy $\mu$ as 
\be
& \Lambda_{\mu}=\{\bm{\lambda} \geq 0: \bm{\lambda} \text{ is supportable by } \mu\}. 
\ee
The supportable \textit{real-time rate region} under all the causal policies is defined as $\Lambda = \bigcup_{\mu \in \cpolicy} \Lambda_{\mu}$. The overall performance of a policy $\mu$ is evaluated by the \textit{efficiency ratio} defined as 
\be \label{eq:capacity region}
\gamma_\mu^\star = \sup \{\gamma: \gamma \Lambda \subseteq \Lambda_\mu\}.
\ee
For a casual policy $\mu$, we aim to provide a \textit{universal lower bound} on the efficiency ratio that holds for ``all'' Markovian traffic-fading processes, \textit{without} knowing the Markov chain.

For clarity, we will use $\EF{\cdot}$ to denote expectation with respect to the \textit{random outcomes of the fading channel}, and $\ES^\ALGRAN[{\cdot}]$ to denote expectation with respect to the \textit{random decisions of a randomized algorithm} \ALG, whenever applicable.


\section{Scheduling Algorithms and Main Results}
Recall that $\AMISE$ is the set of all maximal independent sets of interference graph $G_I$, and $\BUFFER$ is the set of links that have packets to transmit at time $t$. At any time $t$, we define the set of maximal schedules $\AMIS(t)=\{D \cap \BUFFER, D \in \AMISE\}$.

Define the \textit{gain} of maximal schedule $M \in \AMIS(t)$ to be the total deficit of packets transmitted successfully at time $t$, i.e., $\GM{}(t) := \sumM{} \IF_l(t) w_l(t)$. We define the \textit{weight} of maximal schedule $M$ to be its expected gain, conditional on $\STATET$, i.e.,
\be
W_{M}(t) := \EF{\GM{}(t)|\STATET} = \sumM{} w_l(t) q_l(t) \label{eq:weight-defn}.
\ee
We use  $M^{\star}(t) := \arg\max_{M\in \AMIS(t)}\W{}$ to denote the \textit{Max-Weight Schedule} (\MWS) at time $t$, and $W^{\star}(t) := W_{M^{\star}(t)}(t)$ to denote its weight. 

Given a Markov policy \ALG, let $\PI_M(t)$ be the probability that \ALG selects maximal schedule $M$ at time $t$. Hence, the probability that link $l$ is scheduled at time $t$ is
\be \label{def:phi}
&\PHI_l(t)=\sum_{M \in \AMIS(t)} \PI_M(t) \mathds{1}( l \in M).
\ee
Further, the expected gain of \ALG is given by
\be
   \ERF{\GALG(t)|\STATET} &=&\sum_{\M{}\in \AMIS(t)} \PI_{\M{}}(t) \W{} \label{eq:glag-generic2}\\
    &=&\ \ \sum_{l\in \USERS} \PHI_l(t) \FP_l(t) w_l(t).
    \label{eq:galg-generic}
\ee

Without loss of generality, we consider natural policies that transmit the earliest-deadline packet from every selected link in the schedule. This is because, similar to \cite{tsanikidis2020power}, if a policy transmits a packet that is not the earliest-deadline, that packet can be replaced with the earliest-deadline packet of that link, and this only improves the state. Similarly, the optimal policy always selects a maximal schedule at any time. 

\subsection{Converting Classical Non-Real-Time Algorithms for Real-Time Scheduling} 
The following theorem allows us to convert non-real-time scheduling policies to real-time scheduling policies, hence enabling the use of numerous policies from the literature of  traditional non-real-time scheduling. \NEW{More specifically, policies whose expected gain at any time $t$ is $\psi$-fraction of the gain of \MWS (i.e. $W^\star(t)$) yield an efficiency ratio of at least $\psi/(\psi+1)$. The result is stated formally in Theorem~\ref{thm:convert}}.

\begin{theorem}
\label{thm:convert}
Consider any policy $\ALG$ such that, at every time $t$, 
$\ERF{\GALG(t)|\STATET} \geq  \psi W_{M^\star}(t)$, whenever $\|\WSTATE{t}\| \geq W'$, for some finite $W'$. Then 
\[\EFRT{\mu}\geq \frac{\psi}{\psi+1}. \]
\end{theorem}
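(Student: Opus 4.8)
The plan is to fix any target $\gamma<\frac{\psi}{\psi+1}$ and any $\bm{\lambda}\in\gamma\Lambda$ (so that $\bm{\lambda}/\gamma\in\Lambda$), and to show that \ALG strongly stabilizes the deficit queues in the sense of \dref{eq:strong stability}; by \cite{neely2010queue} this implies the delivery guarantee \dref{eq:delivery ratio}, hence $\bm{\lambda}\in\Lambda_\mu$, and taking the supremum over such $\gamma$ gives $\EFRT{\mu}\ge\frac{\psi}{\psi+1}$. I would use the quadratic Lyapunov function $V(t)=\tfrac12\sum_{l\in\USERS}w_l(t)^2$. Because the traffic--fading process $\TFSTATE{t}$ is Markovian, a one-slot drift carries an uncontrolled dependence on the current chain state, so I would instead bound the $T$-slot drift $\E{V(t+T)-V(t)\mid\STATET}$ for a large but fixed horizon $T$, chosen below to exceed the mixing time of $\TFSTATE{t}$.

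Expanding the deficit recursion \dref{eq:deficit queue}, the squared increments $(\DEFARR{s}-I_l(s))^2$ are bounded (since $\DEFARR{s}\le\maxpacks$ and $I_l(s)\le1$) and collect into an additive constant $BT$. For the service part I would invoke the hypothesis directly: whenever $\NID{s}\ge W'$, the identity \dref{eq:galg-generic} gives $\sum_{l\in\USERS}w_l(s)\,\ERF{I_l(s)\mid\mathcal S(s)}=\ERF{\GALG(s)\mid\mathcal S(s)}\ge\psi\,W^\star(s)$. For the arrival part I would freeze the deficits at their time-$t$ values; since $|w_l(s)-w_l(t)|$ grows at most linearly in $s-t$, over the fixed horizon this costs only an additive constant, while the ergodic average of $\DEFARR{s}$ equals $\lambda_l$ by \dref{eq:deficitarr}. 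Thus, up to constants,
\[
\tfrac1T\,\E{V(t+T)-V(t)\mid\STATET}\ \le\ B+\sum_{l\in\USERS}w_l(t)\lambda_l-\psi\,W^\star(t).
\]

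The crux is to make the service term dominate the arrival term with a strictly negative margin, which requires a lower bound on the Max-Weight value $W^\star(t)$ in terms of the scaled rate $\bm{\lambda}/\gamma$. Since $\bm{\lambda}/\gamma\in\Lambda$, there is a causal policy $\mu'\in\cpolicy$ supporting it, delivering each link $l$ at long-run rate at least $\lambda_l/\gamma$. Coupling \ALG and $\mu'$ on the \emph{same} arrival and fading sample paths and weighting $\mu'$'s deliveries by the frozen deficits $w_l(t)$, its expected weighted throughput over the horizon is at least $\tfrac1\gamma\sum_l w_l(t)\lambda_l\cdot T$ up to a constant. Comparing this quantity to $\sum_s W^\star(s)$ is exactly where the deadline structure enters and is the main obstacle: unlike the classical backlogged setting, a link carrying a large deficit may be \emph{empty} under \ALG, so $\mu'$ can gain from links outside $\NONEMPTY$ and $W^\star(s)$ need not dominate $\mu'$'s per-slot weighted gain. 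Reconciling the two buffers---any packet available to $\mu'$ but missing from \ALG's buffer must have been delivered earlier by \ALG within its deadline---is what converts the naive factor $\tfrac1\gamma$ into $\tfrac1\gamma-1$; I expect this comparison to yield a bound of the form $W^\star(t)\ge(\tfrac1\gamma-1)\sum_{l\in\USERS}w_l(t)\lambda_l-O(1)$, and it is precisely this buffer reconciliation that produces the $\tfrac1{\psi+1}$ loss.

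Substituting this bound into the drift gives
\[
\tfrac1T\,\E{V(t+T)-V(t)\mid\STATET}\ \le\ B'+\Big[1-\psi\big(\tfrac1\gamma-1\big)\Big]\sum_{l\in\USERS}w_l(t)\lambda_l .
\]
Since $\gamma<\frac{\psi}{\psi+1}$ is equivalent to $\psi(\tfrac1\gamma-1)>1$, the bracketed coefficient is a negative constant; as the links with $\lambda_l>0$ keep $\sum_l w_l(t)\lambda_l$ of order $\NID{t}$ (links with $\lambda_l=0$ receive no deficit arrivals and stay bounded), the drift is at most a negative constant once $\NID{t}$ is large enough, the regime $\NID{t}<W'$ being absorbed into $B'$ via bounded increments. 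Foster--Lyapunov then yields \dref{eq:strong stability}, and since $\gamma<\frac{\psi}{\psi+1}$ and $\bm{\lambda}\in\gamma\Lambda$ were arbitrary, $\EFRT{\mu}\ge\frac{\psi}{\psi+1}$.
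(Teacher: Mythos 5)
Your overall plan (a $T$-slot quadratic-Lyapunov drift plus a sample-path coupling with a causal policy $\mu'$ supporting $\bm{\lambda}/\gamma$) is a genuinely different framework from the paper's, but it has a real gap exactly at the step you yourself flag as the crux. The bound you ``expect,'' $W^\star(t)\ge(\tfrac1\gamma-1)\sum_{l}w_l(t)\lambda_l-O(1)$, is asserted rather than derived, and it cannot hold slot-by-slot: $W^\star(t)$ is computed over \ALG's \emph{nonempty} links only, so at a slot where \ALG's buffers happen to be empty (bursty arrivals), $W^\star(t)=0$ while $\sum_l w_l(t)\lambda_l$ can be of order $\NID{t}$. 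It could only hold averaged over the horizon, and even then the reconciliation you sketch does not deliver that form. Splitting $\mu'$'s deliveries into (a) packets still present in \ALG's buffer at the delivery slot and (b) packets \ALG has already delivered, part (a) is dominated (up to frozen-weight errors) by the horizon-average of $W^\star(s)$, but part (b) is dominated by \emph{\ALG's own weighted deliveries} over a slightly enlarged window --- not by $\sum_l w_l(t)\lambda_l$: the packets \ALG delivers arrive at rate $\overline a_l\ge\lambda_l$, and $\mu'$ may re-deliver essentially all of them. So what the coupling actually yields, in time-averaged form, is $\tfrac1\gamma\sum_l w_l(t)\lambda_l\lesssim \overline{W}^\star+\overline{G}_{\ALG}$, where $\overline{W}^\star$ and $\overline{G}_{\ALG}$ denote the horizon-averages of $W^\star(s)$ and of $\ERF{\GALG(s)|\mathcal S(s)}$. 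If you then isolate $W^\star$ using the only available upper bound on \ALG's gain, namely $\ERF{\GALG(t)|\STATET}\le W^\star(t)$, you get $\overline{W}^\star\ge\tfrac1{2\gamma}\sum_l w_l(t)\lambda_l$, and your drift argument proves stability only for $\gamma<\psi/2$ --- strictly weaker than $\tfrac{\psi}{\psi+1}$ whenever $\psi<1$, e.g.\ for \GMS with $\psi=1/\beta$. The repair is to \emph{not} isolate $W^\star$: keep the two-term bound, apply the hypothesis to both terms to get $\overline{W}^\star+\overline{G}_{\ALG}\le(1/\psi+1)\,\overline{G}_{\ALG}$, and write the drift with the service term $\overline{G}_{\ALG}$ itself; the coefficient becomes $1-\tfrac{\psi}{(\psi+1)\gamma}$, negative exactly when $\gamma<\tfrac{\psi}{\psi+1}$.

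That repaired two-term comparison is precisely the paper's mechanism, executed there in a different framework. The paper builds regenerative frames from return times of the traffic-fading chain (Definition~\ref{def:frame}) and compares \ALG not with a coupled causal $\mu'$ but with the \emph{non-causal} max-gain policy over a frame, via an amortized/compensation argument: Lemma~\ref{lemmageneral-gain-analysis} is the rigorous counterpart of your buffer reconciliation and gives exactly $\ERF{\GADVE{\M{}}|J,\STATET}\le \W{}(1-\PI_{M})+\ERF{\GALG(t)|\STATET}+\EPSM$, with $\EPSM=K(F+\maxpacks\MAXDEAD)$ absorbing all buffer/deficit mismatches; then $\W{}\le W^\star(t)\le\frac1\psi\ERF{\GALG(t)|\STATET}$ and Propositions~\ref{prop} and~\ref{prop-approx} convert the per-slot comparison into stability. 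The frame device also sidesteps a second soft spot in your sketch: you need the coupled $\mu'$ --- whose own buffer state at time $t$ is unknown and correlated with $\STATET$ --- to deliver at least $\lambda_l/\gamma$ per link over the specific conditioned window $[t,t+T)$; long-run supportability gives this only in an ergodic/Ces\`aro sense, and making it uniform over windows and conditioning requires extra restart and mixing arguments that the comparison against non-causal framed policies (via $\liminf_{k}\Lambda_{NC}(\frm^{(k)})\supseteq\interior(\Lambda)$) renders unnecessary. In short, your route could likely be completed, but only after restructuring its central inequality into the paper's two-term form; as written, the crux step is unproven, false per-slot, and stronger than what the argument you sketch can produce.
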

Note that this conversion results in deadline-oblivious policies which can be preferred in cases where information about the deadlines of packets is either not accurate or not available. \NEW{We remark that for certain policies the result of Theorem~\ref{thm:convert} is tight as seen in Corollary~\ref{cor:mws} below, whereas for other policies the bound can be considerably loose.} 

\begin{corollary} \label{cor:mws}
\MWS policy provides efficiency ratio $\EFRT{\MWS}= \frac 1 2$ for real-time scheduling under Markov traffic-fading processes.
\end{corollary}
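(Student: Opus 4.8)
\emph{Overview and lower bound.} I would prove the identity by sandwiching the efficiency ratio, establishing $\EFRT{\MWS}\ge \tfrac12$ and $\EFRT{\MWS}\le \tfrac12$ separately. The lower bound is the special case of the conversion result with $\psi=1$. Indeed, \MWS is deterministic and selects $M^\star(t)=\arg\max_{M\in\AMIS(t)} \W{}$, so its conditional expected gain coincides with the weight of the max-weight schedule,
\[ \ERF{\GALG(t)\mid \STATET} \;=\; \EF{\mathcal G_{M^\star(t)}(t)\mid \STATET} \;=\; W_{M^\star}(t), \]
by the definitions \dref{eq:weight-defn}--\dref{eq:galg-generic}. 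Hence the hypothesis of Theorem~\ref{thm:convert} holds with $\psi=1$ for \emph{every} $t$ (in particular for any threshold $W'$), and the theorem immediately yields $\EFRT{\MWS}\ge \psi/(\psi+1)=1/2$.

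\emph{Upper bound: the witnessing instance.} It remains to exhibit a single Markovian traffic-fading process on which \MWS cannot do better than $1/2$; by \dref{eq:capacity region} one bad instance suffices. I would build a minimal example---two interfering links in a collocated graph---with a short (period-two) traffic-fading pattern in which one link repeatedly carries an \emph{urgent} packet (deadline $1$) on a \emph{degraded} channel, while the other carries a \emph{relaxed} packet (deadline $2$) on a clean channel. The point is that the \MWS weight $w_l(t)q_l(t)$ in \dref{eq:weight-defn} is deadline-oblivious and, moreover, is discounted by the fading factor $q_l(t)$ on the urgent link; this induces \MWS to serve the non-urgent clean link first and let the urgent packet expire, whereas a deadline- and fading-aware policy serves the urgent packet first and the relaxed one in the following slot. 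I would tune the fading values so that, in expectation, the deadline-aware policy delivers twice as many packets per period as \MWS, independently of the current deficits.

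\emph{Upper bound: stability versus instability.} I would then fix a rate vector $\bm\lambda$ on the boundary of the region achievable by the deadline-aware policy, verify $\bm\lambda\in\Lambda$ (this policy keeps the deficit queues stable, by a standard drift/fluid argument specialized to the constructed pattern), and show that under \MWS the deficit queues are unstable at load $\gamma\bm\lambda$ for every $\gamma>1/2$. Consequently $\gamma\bm\lambda\notin\Lambda_{\MWS}$ while $\bm\lambda\in\Lambda$, so by \dref{eq:capacity region} no $\gamma>1/2$ satisfies $\gamma\Lambda\subseteq\Lambda_{\MWS}$, giving $\EFRT{\MWS}\le 1/2$. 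Combining the two bounds proves $\EFRT{\MWS}=1/2$.

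\emph{Main obstacle.} The lower bound is essentially immediate from Theorem~\ref{thm:convert}; the real work is the tightness (upper bound). The delicate point is that deficits are self-regulating---a starved link's deficit grows, eventually inflating its \MWS weight---so I must choose the fading discount on the urgent link (a success probability bounded away from $1$) strong enough that \MWS keeps prioritizing the wrong link even as the urgent link's deficit grows, capping its delivered throughput at half of optimal. Making this rigorous requires a Lyapunov drift computation on the two-link chain showing that $\sum_{l} w_l(t)$ has strictly positive drift under \MWS whenever the load exceeds $1/2$, which is the technical heart of the argument.
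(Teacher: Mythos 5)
Your lower bound is exactly the paper's argument: \MWS satisfies the hypothesis of Theorem~\ref{thm:convert} with $\psi=1$ (its conditional expected gain equals $W_{M^\star}(t)$ by definition), giving $\EFRT{\MWS}\ge 1/2$. For the upper bound, however, the paper does not build a new instance: it observes that on a two-link network \emph{without} fading, \MWS coincides with \LDF, and invokes the known result that \LDF has efficiency ratio at most $1/2$ for Markovian traffic with deterministic deficit admission~\cite{kang2014performance,tsanikidis2020power}; since the efficiency ratio is a universal bound over all traffic-fading processes on a given graph, that single non-fading instance suffices.

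Your replacement construction for the upper bound has a genuine gap. You want \MWS to keep serving the clean, relaxed (deadline-$2$) link ``independently of the current deficits,'' by discounting the urgent link with a success probability bounded away from $1$. But \MWS compares $w_l(t)q_l(t)$, which is \emph{linear} in the deficits with a \emph{fixed} multiplicative fading factor: once $w_u(t) > w_c(t)\,q_c/q_u$, \MWS prioritizes the urgent link, and this threshold is inevitably crossed because the starved link's deficit grows by a constant per frame while the served link's deficit stays bounded. No choice of constant $q_u>0$ can prevent this. Worse, in your two-link pattern the mistake is self-correcting in a strong sense: when \MWS does prioritize the urgent link in slot 1, the relaxed deadline-$2$ packet is still delivered in slot 2, so in those frames \MWS replicates the optimal schedule exactly, serving the urgent link at rate $q_u$ --- which is all the optimal policy can achieve anyway, since the urgent deadline-$1$ packet can only ever be attempted in its arrival slot. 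A short drift computation then shows the deficits are stable under \MWS for any $p_1<q_u$, $p_2<1$, i.e., \MWS supports essentially the whole region $\Lambda$ on your instance, and its per-instance ratio tends to $1$, not $1/2$. Hence your claimed ``positive drift of $\sum_l w_l(t)$ whenever the load exceeds $1/2$'' is false for this construction, and the upper bound does not follow. To repair the proof you would either need the substantially more delicate counterexample of the cited works --- one in which a wrong priority decision causes a loss that the deficit feedback cannot later compensate --- or simply reduce to the non-fading \LDF result as the paper does.
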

\begin{proof}
Using Theorem~\ref{thm:convert} for \MWS with $\psi=1$, we directly obtain $\EFRT{\MWS} \geq 1/2$. We can get the the opposite inequality through an adversarial example. If we consider a simple network with two interfering links without fading, then \MWS reduces to \LDF, which has been shown to have $\EFRT{}\leq 1/2$ for Markovian traffic with deterministic deficit admission~\cite{kang2014performance,tsanikidis2020power} (recall that efficiency ratio $\EFRT{}$ is defined as a universal bound for all traffic-fading processes for any given  graph).
\end{proof}
Consider a Greedy Maximal Scheduling (\GMS) policy defined as follows: Order the nonempty links $\BUFFER$ in the decreasing order of the product $w_l(t) \FP_l(t)$. Then construct a schedule recursively by including the nonempty link with the largest $w_l(t) \FP_l(t)$, removing the interfering links, and repeating the same procedure over the remaining links.

The following corollary extends the result of~\cite{kang2014performance} which was shown for i.i.d. traffic without fading for \LDF. 
\begin{corollary} \GMS provides an efficiency ratio $\EFRT{\GMS}\geq \frac 1 {\beta+1}$ for real-time scheduling under any Markov traffic-fading process, where $\beta$ is  the  interference  degree  of  the graph. 
\end{corollary}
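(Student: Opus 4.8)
The plan is to show that \GMS is a $(1/\beta)$-approximation to \MWS in the sense required by Theorem~\ref{thm:convert}, and then to invoke that theorem with $\psi = 1/\beta$, which yields $\EFRT{\GMS}\geq \frac{1/\beta}{1/\beta+1} = \frac{1}{\beta+1}$. Since \GMS is deterministic given the state $\STATET$, it selects a single maximal schedule $M_{\GMS}(t)$ with probability one, so by \dref{eq:glag-generic2} its expected gain over the fading outcomes is exactly $\ERF{\GALG(t)\mid\STATET} = W_{M_{\GMS}(t)}(t) = \sum_{l\in M_{\GMS}(t)} w_l(t)\FP_l(t)$. Thus the entire burden reduces to the deterministic combinatorial claim $W_{M_{\GMS}(t)}(t) \geq \frac{1}{\beta}\, W_{M^\star}(t)$, which, once established for every $t$, holds unconditionally (so any finite $W'$, e.g.\ $W'=0$, suffices as the hypothesis of Theorem~\ref{thm:convert}).

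To prove this claim, I would fix $t$ and abbreviate the effective weight of each nonempty link by $u_l := w_l(t)\FP_l(t)$, so that both the schedule weights $W_{M}(t)=\sum_{l\in M} u_l$ and the greedy ordering are governed by the same quantities. The core is a charging argument: assign to each $j\in M^\star(t)$ a representative $r(j)\in M_{\GMS}(t)$ by setting $r(j)=j$ if $j\in M_{\GMS}(t)$, and otherwise letting $r(j)=l$ be an interfering neighbor of $j$ in $\ifgraph$ that the greedy procedure had already included at the moment $j$ was examined. Because links are processed in decreasing order of $u$ and are never removed once added, such an $l$ exists, lies in $M_{\GMS}(t)$, and satisfies $u_l\geq u_j$. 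In either case $r(j)\in M_{\GMS}(t)$ and $u_{r(j)}\geq u_j$.

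The key counting observation is that no $l\in M_{\GMS}(t)$ can represent more than $\beta$ links of $M^\star(t)$: every $j$ with $r(j)=l$ lies in the closed neighborhood $\{l\}\cup N(l)$ and belongs to the independent set $M^\star(t)$, so these links form an independent set among $l$ and its neighbors, whose size is at most the interference degree $\beta$ by definition. Combining the two facts gives
\[
W_{M^\star}(t)=\sum_{j\in M^\star(t)} u_j \;\leq\; \sum_{l\in M_{\GMS}(t)} \big|r^{-1}(l)\big|\, u_l \;\leq\; \beta \sum_{l\in M_{\GMS}(t)} u_l = \beta\, W_{M_{\GMS}(t)}(t),
\]
which is precisely $W_{M_{\GMS}(t)}(t)\geq \frac{1}{\beta}W_{M^\star}(t)$.

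I expect the only delicate points to be (i) matching the greedy tie-breaking rule to the inequality $u_{r(j)}\geq u_j$, and (ii) verifying that the set of links charged to a single $l$ is genuinely an independent set confined to a closed neighborhood, so that the definition of interference degree applies cleanly. Everything else is a direct substitution of $\psi=1/\beta$ into Theorem~\ref{thm:convert}.
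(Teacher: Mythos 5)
Your proposal is correct and follows essentially the same route as the paper: both reduce the corollary to the combinatorial claim that the GMS schedule captures at least a $1/\beta$ fraction of $W_{M^\star}(t)$ and then invoke Theorem~\ref{thm:convert} with $\psi=1/\beta$. Your charging map $r(\cdot)$ is just a repackaging of the paper's iterative argument — the preimages $r^{-1}(l_i)$ are exactly the sets $(N(l_i)\setminus\cup_{j<i}N(l_j))\cap M^\star$ that the paper sums over — so the two proofs are the same in substance.
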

\begin{proof}
Suppose $M=\{l_1,...,l_n\}$ is the scheduled constructed by \GMS, where $l_i$ is the link selected by \GMS in iteration $i$. It is sufficient to show that $\sum_{l \in M} w_l(t) \FP_l \geq \frac 1 \beta \W{\star}$, as in this case we can apply Theorem~\ref{thm:convert} with $\psi=1/\beta$ and consequently get $\EFRT{}\geq \frac{1}{\beta+1}$.
Consider the first iteration where \GMS selects $l_1$. The Max-Weight schedule $M^\star$ will either include link $l_1$, or at most $\beta$ links from the interfering links of $l_1$, but since $l_1$ has the highest $w_l(t) \FP_l$, it holds that $w_{l_1}(t) \FP_{l_1} \geq \frac 1 \beta \sum_{l\in N(l_1)\cap M^\star} w_l(t) \FP_{l}$, where $N(l)$ denotes the neighborhood of link $l$, including link $l$ itself, in graph $G_I$. Then we remove all the links in $N(l_1)$ and repeat the same argument for $\USERS \setminus N(l_1)$ and obtain
$w_{l_2}(t) \FP_{l_2} \geq \frac 1 \beta \sum_{l\in (N(l_2)\setminus N(l_1))\cap M^\star} w_l(t) \FP_{l}$. Repeating this argument, and summing the obtained inequalities over $l_1,\cdots,l_n$, we obtain the required inequality.
\end{proof}
Note that many other results from traditional scheduling literature can be converted using Theorem~\ref{thm:convert}. For example, the \DISTGREEDY policy \cite{joo2015distributed} would also obtain $\EFRT{\DISTGREEDY} \geq \frac{1}{\beta+1}$ but has a lower complexity than \GMS. 

\subsection{Randomized Scheduling Algorithms}\label{sec:genrand}
In this section, we extend {AMIX} policies for collocated networks and general networks introduced in~\cite{tsanikidis2020power} to incorporate fading. We refer to the generalized policies as FAMIX (\textit{Fading-based Adaptive MIX}). \NEW{We describe these policies below.}
\subsubsection{FAMIX-MS: Randomized Scheduling in General Graphs} 

Let $R := |\AMIS(t)|$ be the number of maximal schedules at time $t$. We index and order the maximal schedules such that $M^{(i)}\in \AMIS(t)$ has the $i$-th largest weight (based on \dref{eq:weight-defn}), i.e., 
\be
& \W{(1)}\geq \W{(2)}\cdots \geq \W{(R)}.
\ee
Define the \textit{subharmonic} average of weight of the first $n$ maximal schedules, $n \leq R$,  to be
\be \label{def:subhrm}
\subhrm{n} = \frac {n-1}{\sum_{i=1}^{n} \left(\W{(i)}\right)^{-1}}. 
\ee
Then select schedule  $M^{(i)}$ with probability
\be \label{eq:mwisprob}
    \PI^{\nstar}_{M^{(i)}}(t)\equiv \PI^{\nstar}_{i}(t) = \begin{dcases}
         1- \frac {\subhrm{\nstar}} {\W{(i)}}, &  1\leq i\leq \nstar \\
        0, & \nstar<i\leq R 
    \end{dcases}
\ee
where $\nstar$ is the largest $n$ such that $\{\PI^{n}_i(t), 1 \leq i \leq R\}$ defines a valid probability distribution, i.e., $\PI^{n}_i(t) \geq0$, and $\sum_{i=1}^R\PI^{n}_i(t)=1$. 

\begin{remark}It has been shown in \cite{tsanikidis2020power} that $\nstar$ in such a distribution can be found through a  binary search. Essentially the difference between FAMIX-MS and AMIX-MS in~\cite{tsanikidis2020power} lies in different definitions for the weight of a schedule. 
\end{remark}
\begin{theorem}\label{main-theorem-general}
In a general interference graph $\ifgraph$ with maximal independent sets $\AMISE$, the efficiency ratio of \MPALGONAME\ is 
\[\gamma_{\MPALGONAME}^\star \geq \GRATIO >\frac{1}{2}.\]
\end{theorem}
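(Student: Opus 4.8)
The plan is to follow the Lyapunov-drift analysis of AMIX-MS in \cite{tsanikidis2020power}, re-deriving it with the fading-aware weights and the finite-state Markov dynamics of our model. I emphasize at the outset that this must be a \emph{direct} drift argument and not an application of Theorem~\ref{thm:convert}: as Corollary~\ref{cor:mws} and its matching adversarial example show, the generic conversion $\psi\mapsto\psi/(\psi+1)$ is tight for the myopic \MWS and caps at $1/2$, so it cannot by itself produce a ratio strictly above $1/2$. The improvement to $\GRATIO$ is precisely the payoff of randomization and has to be extracted from the drift itself.

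First I would isolate the single place where fading enters. The schedule probabilities $\PI^{\nstar}_i(t)$ in \dref{eq:mwisprob} are measurable with respect to $\STATET$ and hence independent of the current channel outcomes $\IF_l(t)$; consequently $\ERF{\GALG(t)\mid\STATET}=\sum_{i=1}^{\nstar}\PI^{\nstar}_i(t)\,\W{(i)}$, which is exactly the AMIX-MS gain expression with the deficit-weights replaced by the fading-weights $w_l(t)q_l(t)$. This reduces the combinatorial core to the fading-free case: using $\sum_i\PI^{\nstar}_i(t)=1$ and the identity $\ERF{\GALG(t)\mid\STATET}=\sum_{i=1}^{\nstar}\big(\W{(i)}-\subhrm{\nstar}\big)$ together with the scheduling probabilities $\PHI_l(t)$ of \dref{def:phi}, one shows that the subharmonic-average rule spreads service across all of the top $\nstar$ schedules so that, when played against an arbitrary competing maximal schedule inside the drift, it sustains the factor $\tfrac{R}{2R-1}$ with $R=|\AMIS(t)|\le\CAMISE$. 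Since $r\mapsto r/(2r-1)$ is decreasing, $\tfrac{R}{2R-1}\ge\GRATIO$, which is the origin of the stated constant and shows the bound is worst when all maximal independent sets of $\ifgraph$ are simultaneously active.

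Second, I would run the efficiency-ratio argument proper. Fix $\bm\lambda\in\interior(\GRATIO\,\CAPALL)$, so that $\bm\lambda/\GRATIO\in\interior(\CAPALL)$ is supported by some causal policy $\mu^\star$, and analyze the $T$-slot drift of $V(t)=\tfrac12\NID{t}^2$. Because the joint traffic-fading process is an \emph{unknown} finite-state Markov chain rather than i.i.d., a single-slot drift does not suffice; a window of length $T$ is needed so that the empirical arrival and service averages can be replaced by their stationary counterparts $\lambda_l$ up to a mixing-time correction that is $o(\NID{t})$. Combining this comparison against $\mu^\star$ with the spreading property of the previous step should yield strictly negative drift once $\NID{t}$ exceeds a finite threshold, which gives \dref{eq:strong stability} and hence $\gamma^\star_{\MPALGONAME}\ge\GRATIO$.

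The main obstacle is this last step, and specifically the control of how the deficit-weights $w_l(s)$ evolve across the comparison window. Since both $\mu^\star$ and \MPALGONAME select schedules using the \emph{current} weights $w_l(s)q_l(s)$, one cannot simply freeze $w_l(s)\approx w_l(t)$ and read off a clean per-slot inequality; this mismatch is exactly the deadline-induced gap that forces the $1/2$ ceiling for \MWS. The crux, inherited from the AMIX-MS analysis, is to argue that randomizing over all of the top $\nstar$ schedules -- rather than committing to $M^\star(t)$ -- prevents the starvation of short-deadline links whose deficits would otherwise inflate the drift, so that the factor $\tfrac{R}{2R-1}$ persists as the efficiency ratio without the extra $(\,\cdot+1\,)$ penalty. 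By comparison, the fading is a minor addition, handled by pushing $\EF{\cdot}$ through the drift and invoking the conditional independence noted above, while the finite-state Markov assumption is absorbed entirely into the $T$-slot mixing bound.
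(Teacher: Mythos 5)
Your combinatorial kernel is right and matches the paper's: the FAMIX-MS probabilities are $\STATET$-measurable, so $\ERF{\GALG(t)|\STATET}=\sum_{i\le\nstar}\PI^{\nstar}_i(t)\W{(i)}$, and the subharmonic average plus the AM--HM inequality is exactly what produces the constant (the paper proves $\subhrm{\nstar}\le\frac{\nstar-1}{\nstar}\ERFT{\GALG(t)}$ and then uses $\nstar\le\CAMIS\le\CAMISE$). Your observation that Theorem~\ref{thm:convert} can never give anything above $1/2$ is also correct. The genuine gap is your second step, which is where all the difficulty of the deadline setting lives and which you name but never resolve. You propose a $T$-slot Lyapunov drift of $\frac12\NID{t}^2$ compared against a causal policy $\mu^\star$ supporting $\bm\lambda/\GRATIO\in\interior(\Lambda)$, with a mixing-time correction for the Markov dynamics. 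But such a comparison is not available here: with deadlines, the set of packets a policy can serve at time $s$ depends on its entire scheduling history (packets expire differently under $\mu^\star$ and under \MPALGONAME), so there is no exogenous service vector of $\mu^\star$ that can be inserted into the drift of \MPALGONAME's deficits, and ``randomizing over the top $\nstar$ schedules prevents starvation of short-deadline links'' is an assertion, not a mechanism. Your own last paragraph flags precisely this obstacle (``one cannot simply freeze $w_l(s)\approx w_l(t)$'') and then claims the factor ``persists'' --- that claim \emph{is} the theorem, and it is left unproven.

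The paper closes this gap with machinery that is absent from your proposal: (i) frames defined by return times of the traffic-fading chain to a fixed state (Definition~\ref{def:frame}), so renewal theory replaces your mixing argument; (ii) a reduction from the causal region to the region of \emph{non-causal frame-based} policies, $\liminf_{k\to\infty}\Lambda_{\text{NC}}(\frm^{(k)})\supseteq\interior(\Lambda)$, so that the benchmark becomes the max-gain non-causal policy \OPT over a frame, and frame-level gain domination implies stability (Proposition~\ref{prop}); and (iii) an \emph{amortized gain} analysis (Proposition~\ref{prop-approx}, Lemma~\ref{lemmageneral-gain-analysis}) that couples the two states slot by slot: \OPT's buffers and deficits are forcibly reset to those of \ALG, and \OPT is paid a compensation $\EPSM=K(F+\maxpacks\MAXDEAD)$ making the reset advantageous, which yields the per-slot bound $\ERFTJ{\GADVE{\M{}}}\le\W{}\lp 1-\PI_{M}\rp+\ERFT{\GALG(t)}+\EPSM$ for \emph{every} maximal schedule \OPT might pick. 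Only after this coupling does your kernel apply: the FAMIX-MS probabilities give $\W{(i)}(1-\PI_{i})\le\subhrm{\nstar}$ in both cases $i\le\nstar$ and $i>\nstar$, and AM--HM plus Proposition~\ref{prop-approx} finishes. So your proposal is not an alternative correct route; it attaches the right constant to a comparison argument that, as stated, cannot be carried out.
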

\begin{remark}
Theorem~\ref{main-theorem-general} shows that with randomization we can do \textit{strictly} better than \MWS (Corollary~\ref{cor:mws}). \NEW{In particular, in the case of a complete bipartite inteirference graph \MPALGONAME\ (where there are two maximal independent sets) yields an efficiency ratio of at least $2/3$ while \MWS yields $1/2$.}
\end{remark}
\subsubsection{FAMIX-ND: Randomized Scheduling in Collocated Graphs}

Extending AMIX-ND~\cite{tsanikidis2020power} to fading channels is more challenging. In particular, the derivation in \cite{tsanikidis2020power} relied on two main ideas: (1) it is sufficient to consider only a restricted set of ``non-dominated'' links for transmission, and (2) there is an ordering among the non-dominated links such that given two non-dominated links, having packet in buffer from one of the links is always preferred to that from the other link. Finding such domination relationship for a general Markov fading process is difficult.  Here, we describe an extension under a simplified fading process, where  $\FP_l(t)=\FP_l=\FP$, i.e., the links' success probabilities are fixed and equal.
We allow the channel state across links to be either independent or positively correlated,
i.e., $\Pr[\IF_{l_2}(t)=1|\IF_{l_1}(t)=1]\geq \FP_{l_2}$. 
The above setting could be a reasonable approximation in collocated networks where links have similar reliabilities, and an active channel for one link implies a better condition for the  overall shared wireless medium.

Let $e_l(t)$ denote the deadline of the earliest-deadline packet of link $l$ at time $t$. We say that link $l_1$ dominates link $l_2$ at time $t$ if $e_{l_1}(t)\leq e_{l_2}(t)$, $w_{l_1}(t) \geq w_{l_2}(t)$, i.e., link $l_1$ is more urgent and has a higher deficit. Based on this definition, the set of non-dominated links at any time can be found though a simple recursive procedure as in~\cite{tsanikidis2020power}\NEW{, i.e., add the largest-deficit nonempty link, remove all the links dominated by it from consideration and repeat the process for the remaining links}. The following theorem describes \ALGONAME and its efficiency ratio for a collocated network with a channel success probability $q$. 

\begin{theorem} \label{main-theorem-collocated}
Consider a collocated network, where $q_l=q$, $\forall l \in \USERS$,  and channels of links are independent or positively correlated.  Order and re-index the non-dominated links such that
\[w_1(t) \geq w_2(t) \geq ... \geq w_n(t).\]
Starting from $i=1$, assign probability $\pi_i(t)$ to the $i$-th non-dominated link, 
\be \label{eq:ndprob}
\pi_i(t) = \min \Big\{\frac 1 \FP \Big(1 - \frac {w_{i+1}(t)}{w_i(t)}\Big), 1 - \sum_{j < i } {\pi_j(t)} \Big\}.
\ee
\ALGONAME selects the $i$-th non-dominated link with probability $\pi_i(t)$ and transmits its earliest-deadline packet. Then
\be \label{eq:famix-nd-ratio}
\gamma_{\ALGONAME}^\star \geq 
\Big({ \frac{\FP}{1-e^{-\FP}} + (1-\FP)}\Big)^{-1}:=h(\FP) .
\ee
\end{theorem}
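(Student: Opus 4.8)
The overall plan is a quadratic Lyapunov drift argument on the deficit queues, but deliberately \emph{not} routed through Theorem~\ref{thm:convert}: the generic conversion would only certify $\psi/(\psi+1)$ with $\psi=(1-e^{-\FP})/\FP\le 1$, which is strictly weaker than $h(\FP)$, so the gain of \ALGONAME\ must be compared \emph{directly} against a feasible benchmark policy rather than against $W_{M^\star}(t)$. Fix $\gamma<h(\FP)$ and a rate vector $\bm\lambda$ with $\bm\lambda/\gamma\in\interior(\Lambda)$, and take $V(t)=\|\WSTATE{t}\|^2$. Using the deficit dynamics \dref{eq:deficit queue} and the uniformly bounded increments, the one-slot conditional drift obeys
\[
\ES\!\left[V(t+1)-V(t)\mid \STATET\right]\ \le\ C\ +\ 2\sum_{l}w_l(t)\,\ES[\widetilde a_l(t)\mid\STATET]\ -\ 2\,\ERF{\GALG(t)\mid\STATET},
\]
where $\ERF{\GALG(t)\mid\STATET}=\FP\sum_{l}\pi_l(t)\,w_l(t)$ by \dref{eq:galg-generic} (the schedules are singletons in a collocated graph). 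Averaging over the stationary chain, the arrival term is controlled by $\sum_l w_l(t)\lambda_l=\gamma\sum_l w_l(t)(\lambda_l/\gamma)\le \gamma\,G^\star(t)$, where $G^\star(t)$ upper-bounds the per-slot weighted service deliverable by the policy supporting $\bm\lambda/\gamma$ in state $\STATET$. Hence negative drift, \dref{eq:strong stability}, and $\EFRT{\ALGONAME}\ge h(\FP)$ all follow once the per-slot inequality $\ERF{\GALG(t)\mid\STATET}\ge h(\FP)\,G^\star(t)$ is established for all states with $\|\WSTATE{t}\|$ large (since $\gamma<h(\FP)$).

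Next I would carry out the non-dominated reduction. Following \cite{tsanikidis2020power}, the stated domination relation ($e_{l_1}\le e_{l_2}$, $w_{l_1}\ge w_{l_2}$) lets me argue that both \ALGONAME\ and the benchmark lose nothing by restricting attention to serving non-dominated links, so the comparison reduces to one between distributions over the sorted list $w_1(t)\ge\cdots\ge w_n(t)$. The extension to fading is exactly where the hypotheses $q_l\equiv\FP$ and positive correlation enter: the domination ordering must remain valid jointly with the channel state, and positive correlation ensures the relevant service inequalities point in the right direction, with the independent case being the boundary.

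The analytic core is then a direct evaluation of the gain of \ALGONAME\ under \dref{eq:ndprob}. In the regime where the first branch of \dref{eq:ndprob} is active, $\FP\pi_i(t)=1-w_{i+1}(t)/w_i(t)$, so each term $\FP\pi_i(t)w_i(t)=w_i(t)-w_{i+1}(t)$ telescopes and
\[
\ERF{\GALG(t)\mid\STATET}=\FP\sum_i \pi_i(t)\,w_i(t)=w_1(t)-w_{k^\star+1}(t),
\]
where $k^\star$ is the index at which the budget $\sum_i\pi_i(t)=1$ binds; the $1/\FP$ factor in \dref{eq:ndprob} is precisely what cancels the success probability in this telescoping. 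The binding budget condition amounts (in the worst case) to $\sum_{i\le k^\star}\bigl(1-w_{i+1}(t)/w_i(t)\bigr)=\FP$, and maximizing $w_{k^\star+1}(t)/w_1(t)=\prod_i\bigl(w_{i+1}(t)/w_i(t)\bigr)$ subject to this constraint drives the deficit profile to a geometric decay with ratio $w_{k^\star+1}(t)/w_1(t)\to e^{-\FP}$, yielding the worst-case bound $\ERF{\GALG(t)\mid\STATET}\ge w_1(t)\,(1-e^{-\FP})$.

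It remains to bound the benchmark $G^\star(t)$, and this is the step I expect to be the main obstacle. One must establish that, against a worst-case future-aware policy, the chargeable per-slot weighted service is at most $w_1(t)\bigl[\FP+(1-\FP)(1-e^{-\FP})\bigr]$, capturing that the optimal policy collects $\FP w_1(t)$ from the most valuable link plus an additional $(1-\FP)w_1(t)(1-e^{-\FP})$ from hedging that \ALGONAME\ cannot replicate, since it must commit before the traffic--fading future is revealed. Dividing the two bounds gives
\[
\frac{\ERF{\GALG(t)\mid\STATET}}{G^\star(t)}\ \ge\ \frac{w_1(t)(1-e^{-\FP})}{w_1(t)\bigl[\FP+(1-\FP)(1-e^{-\FP})\bigr]}\ =\ \left(\frac{\FP}{1-e^{-\FP}}+(1-\FP)\right)^{-1}=h(\FP),
\]
which closes the drift and proves the theorem. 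The delicate points are constructing and bounding the worst-case benchmark so that the two regimes (budget binding versus not) are handled uniformly, and verifying that the fading extension of the domination argument preserves every inequality under positive correlation.
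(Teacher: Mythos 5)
Your computation of \ALGONAME's own gain is exactly the paper's: telescoping $\FP\pi_i w_i = w_i - w_{i+1}$ on the active branch of \dref{eq:ndprob} plus the arithmetic--geometric mean inequality gives $\EOT{\GALG(t)} \ge w_1(t)(1-e^{-\FP})$, and your final ratio algebra matches as well. But the step you yourself defer as ``the main obstacle''---bounding the future-aware benchmark---is the heart of the proof, and the cap you posit, $G^\star(t)\le w_1(t)\bigl[\FP+(1-\FP)(1-e^{-\FP})\bigr]$, is false. Take two non-dominated links with $w_1=w_2=w$: then \dref{eq:ndprob} gives $\pi_1=0$, $\pi_2=1$, so \ALGONAME earns $\FP w$, while the benchmark charged against \ALGONAME's state (i.e., amortized: it schedules link $1$ and additionally collects what \ALGONAME delivers on link $2$ when channel $1$ is OFF) earns $\FP w + (1-\FP)\FP w=\FP w(2-\FP)$, which exceeds your cap because $\FP > 1-e^{-\FP}$. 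The paper's benchmark bound is instead \emph{relative to the algorithm's own gain}: through a four-case coupling/compensation argument (with per-slot compensation $\EPSC = F + (\maxpacks+1)\MAXDEAD$, and where the positive-correlation hypothesis is used precisely to discard the cross terms $\epsilon_{ij}$ in the joint channel probabilities), it establishes $\ETT{\GADVAND(t)} \le \FP w_1(t) + (1-\FP)\,\EOT{\GALG(t)} + \EPSC$, after also showing the benchmark's amortized gain is maximized by choosing link $1$. Since $A \mapsto A/(\FP w_1 + (1-\FP)A)$ is increasing in $A$, substituting $A \ge w_1(1-e^{-\FP})$ yields $h(\FP)$; in the example above the true ratio is $1/(2-\FP)\ge h(\FP)$ even though your fixed, algorithm-independent cap fails, so no cap of that form can serve as the intermediate step.

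There is a second gap in the outer machinery. The per-slot drift comparison $\sum_l w_l(t)\lambda_l \le \gamma\, G^\star(t)$ against ``the policy supporting $\bm{\lambda}/\gamma$ in state $\STATET$'' is not well-founded for an unknown Markov traffic-fading process: the region $\Lambda$ is attained by non-causal policies that see the whole future pattern and run on their own buffers and deficits, not on \ALGONAME's state, so there is no per-state, per-slot service quantity to compare with. The paper resolves exactly this via the renewal-frame construction (Definition~\ref{def:frame}), the framed non-causal class $\strat$, the amortized coupling that forces the benchmark onto \ALGONAME's state at $O(F)$ compensation cost, Lemma~\ref{nd-lemma} restricting the benchmark to non-dominated links at a cost of $(\maxpacks+1)F^2$ per frame (note that this lemma's coupling relies on the equality of the $q_l$'s---you attribute the hypotheses to the wrong step, since positive correlation enters the four-case analysis, not the domination argument), and Propositions~\ref{prop} and~\ref{prop-approx}, which convert the per-slot amortized inequality with $O(F)$ error terms into deficit stability. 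Without this (or an equivalent) frame/amortization layer, your Lyapunov argument does not close.
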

\begin{remark}
Note that due to channel uncertainty, \ALGONAME boosts the probability of larger-weight links. Intuitively, as $q$ becomes small, deadlines of packets are ``effectively'' reduced, as each packet will need to be transmitted several times before success. 
\end{remark}
\begin{remark} 
The  lower bound $h(\FP)$ on efficiency ratio in \dref{eq:famix-nd-ratio} is a monotone function of $\FP$, which increases from $0.5$ and to $\frac{e-1}{e}$, as $\FP$ goes from $0$ to $1$. For $\FP=1$, this recovers the result of \cite{tsanikidis2020power} for non-fading channels, i.e., $\gamma_{\text{AMIX-ND}}^\star \geq \frac{e-1}{e}$. 

In the case of unequal $\FP_i \in (\FP_{\min}, \FP_{\max})$, 
using \dref{eq:ndprob} by replacing $\FP$ with $\FP_{\min}$,
will give an efficiency ratio of at least $
({ \frac{\FP_{\max}}{1-e^{-\FP_{\min}}} + 1-\FP_{\max}})^{-1}:=h(\FP_{\min},\FP_{\max}). 
$ 
Depending on $\FP_{\max}, \FP_{\min}, \USERSC$, we can choose either  \ALGONAME or \MPALGONAME and achieve $\gamma^\star \geq \max \{h(\FP_{\min},\FP_{\max}), \frac{\USERSC}{2\USERSC-1}\}$. 
\end{remark}

\subsection{Low-Complexity and Distributed Randomized Variants}
The general algorithm \MPALGONAME in Section~\ref{sec:genrand} potentially randomizes over all the maximal schedules. This can be computationally expensive in large networks that may have many maximal schedules. In this section, we design variants that only need to consider a subset of the maximal schedules, or are distributed.
\subsubsection{Covering-Based Randomized Algorithms} \label{section-efficient}

We propose two variants that only need to consider a subset of the maximal schedules.
Proposition \ref{coloring-proposition} below states a sufficient condition under which randomization over a subset of the maximal schedules can provide a related approximation on the efficiency ratio. 

\begin{proposition} \label{coloring-proposition}
Consider policy $\ALG=\MPALGONAME|_{\AMIS_0(t)}$ which, at any time $t$, selects a schedule from a subset $\AMIS_0(t) \subseteq \AMIS(t)$, according to probabilities of \MPALGONAME computed for $\AMIS_0(t)$. Suppose that for every $M \in \AMIS(t)\setminus \AMIS_0(t)$,
\be
\psi(t)  \WE{} \bar \PHI_l(t)  \leq  \max_{ \M{\prime} \in \AMIS_0(t) }  \sum_{l \in \M{\prime}}  \FP_l(t) w_l(t) \bar \PHI_l(t) 
  \nonumber\\
+  (1-\psi(t)) \ERF{\GALG(t)|\STATET}, \label{eq:colortricksuffnew}
\ee
for some $\psi(t) \in (0,1]$, where $\PHI_l(t)$ was defined in \dref{def:phi} and $\overline \PHI_l(t) = 1-\PHI_l(t)$. Then
\be
& \EFRT{\ALG} \geq \MIN{t\geq0}{\psi(t) \frac {|\AMIS_0(t)|}{2|\AMIS_0(t)|-1} }.
\ee
\end{proposition}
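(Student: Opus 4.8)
The plan is to run the same Lyapunov-drift engine that underlies Theorem~\ref{main-theorem-general}, but with the collection $\AMIS_0(t)$ playing the role of $\AMIS(t)$, and to invoke the hypothesis \dref{eq:colortricksuffnew} solely to pay for the schedules the algorithm never selects. Write $G(t):=\ERF{\GALG(t)|\STATET}$ for the expected gain \dref{eq:galg-generic}, and for any maximal schedule $M$ let $U_M(t):=\sum_{l\in M}\FP_l(t)w_l(t)\bar\PHI_l(t)$ denote its \emph{unserved weight} under the scheduling probabilities $\PHI_l(t)$ of $\ALG$. The key reduction I would isolate from the proof of Theorem~\ref{main-theorem-general} is a per-slot sufficient condition of the form: any policy whose probabilities satisfy, at every $t$ with $\|\WSTATE{t}\|$ large,
\[
G(t) \;\geq\; \gamma\,\Big(G(t) + \max_{M\in\AMIS(t)} U_M(t)\Big),
\]
stabilizes the deficit queues for every $\bm\lambda\in\interior(\gamma\Lambda)$ and hence has efficiency ratio at least $\gamma$. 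Since $U_M(t)\le W_M(t)\le W^\star(t)$, this is always at least as strong as the $\psi/(\psi+1)$ bound of Theorem~\ref{thm:convert}, and it is exactly where randomization pays off, by forcing $\bar\PHI_l(t)<1$ on the high-deficit links. Crucially, the maximum still ranges over the \emph{full} set $\AMIS(t)$, because the optimal policy defining $\Lambda$ remains free to use the excluded schedules.

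First I would control the unserved weight over the \emph{retained} schedules. Since $\ALG$ is exactly \MPALGONAME\ driven by the subharmonic probabilities \dref{eq:mwisprob} computed on the $R_0:=|\AMIS_0(t)|$ schedules of $\AMIS_0(t)$, the same computation that produces the ratio $R_0/(2R_0-1)$ in Theorem~\ref{main-theorem-general}---which depends only on the list of weights randomized over, not on the ambient graph---yields
\[
\max_{M\in\AMIS_0(t)} U_M(t) \;\leq\; \frac{R_0-1}{R_0}\,G(t).
\]
Next I would dispose of the \emph{excluded} schedules. For each $M\in\AMIS(t)\setminus\AMIS_0(t)$, hypothesis \dref{eq:colortricksuffnew} reads $\psi(t)\,U_M(t)\le \max_{M'\in\AMIS_0(t)}U_{M'}(t)+(1-\psi(t))G(t)$; substituting the previous bound gives
\[
U_M(t) \;\leq\; \frac{1}{\psi(t)}\Big(\tfrac{R_0-1}{R_0}G(t)+(1-\psi(t))G(t)\Big) \;=\; \Big(\tfrac{2R_0-1}{\psi(t)R_0}-1\Big)G(t),
\]
which, using $\psi(t)\le1$, dominates the bound for the retained schedules; hence $\max_{M\in\AMIS(t)}U_M(t)\le\big(\tfrac{2R_0-1}{\psi(t)R_0}-1\big)G(t)$.

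Plugging this into the per-slot condition,
\[
\frac{G(t)}{G(t)+\max_{M\in\AMIS(t)}U_M(t)} \;\geq\; \frac{1}{\,1+\big(\tfrac{2R_0-1}{\psi(t)R_0}-1\big)\,} \;=\; \psi(t)\,\frac{R_0}{2R_0-1},
\]
and taking the infimum over $t$ yields $\EFRT{\ALG}\ge\min_{t\ge0}\psi(t)\,\frac{|\AMIS_0(t)|}{2|\AMIS_0(t)|-1}$, as claimed.

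The main obstacle is the first step: extracting the drift inequality in the $\max_{M}U_M$ form and certifying that the retained-set bound $\max_{M\in\AMIS_0}U_M\le\frac{R_0-1}{R_0}G$ transfers verbatim when $\AMIS_0(t)$ replaces $\AMIS(t)$. This requires checking that the guarantee of the subharmonic construction \dref{eq:mwisprob} is intrinsic to the list of weights $\{W_M\}_{M\in\AMIS_0(t)}$ it is fed and is insensitive to which independent sets those schedules correspond to, and that the $\bar\PHI_l(t)$ appearing in $U_M$ are the \emph{same} restricted probabilities on both sides of \dref{eq:colortricksuffnew} and in the drift. Once that consistency is in place, the remaining algebra is routine, and the role of \dref{eq:colortricksuffnew} is transparent: it is precisely the inequality needed to absorb every discarded schedule into the retained ones at the multiplicative cost $\psi(t)$.
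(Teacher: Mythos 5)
Your proposal is correct and follows essentially the same route as the paper's proof: both rest on the amortized-gain identity of Lemma~\ref{lemmageneral-gain-analysis} (your $U_M(t)+G(t)$ decomposition), apply the subharmonic-probability computation of Theorem~\ref{main-theorem-general} to the restricted family $\AMIS_0(t)$ to bound the retained schedules, use hypothesis~\dref{eq:colortricksuffnew} to absorb the excluded schedules at cost $\psi(t)$, and conclude via Proposition~\ref{prop-approx}. The only difference is bookkeeping: the paper phrases the argument as a chain of two max-dominance inequalities over amortized gains, while you unpack them into explicit bounds on the unserved weights $U_M(t)$, which is equivalent.
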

\NEW{This condition in Proposition~\ref{coloring-proposition} has parameter $\psi(t)$. If the condition is satisfied for higher $\psi(t)$ the resulting bound is improved.}

\NEW{Next, by focusing on a special case in which Condition~\dref{eq:colortricksuffnew} holds, we can design provably efficient policies by considering a small set of schedules such that any other maximal schedule can be covered by them.}
\begin{lemma}\label{lemma:suffcond}
Suppose every $\M{} \in \AMIS(t)\setminus \mathcal \AMIS_0(t)$ is \textit{covered} by at most $\zeta$ maximal schedules from $\AMIS_0(t)$, i.e., 
\be \label{eq:coverdef}
M \subseteq \cup_{M^{\prime} \in S_M} \M{\prime}, \text{ for some } S_M \subseteq \AMIS_0(t):\ |S_M|\leq \zeta. 
\ee
Then Condition \dref{eq:colortricksuffnew} holds with fixed $\psi(t)=\frac{1}{\zeta}$.
\end{lemma}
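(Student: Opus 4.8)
The plan is to check Condition~\dref{eq:colortricksuffnew} separately for each fixed $M\in\AMIS(t)\setminus\AMIS_0(t)$ through a direct over-counting argument, and then to simply drop the nonnegative term $(1-\psi(t))\,\ERF{\GALG(t)|\STATET}$ from the right-hand side. Write $g_l(t):=\FP_l(t)\,w_l(t)$, so that $g_l(t)\ge 0$, and note that $\bar\PHI_l(t)=1-\PHI_l(t)\in[0,1]$ because $\PHI_l(t)$ is a scheduling probability; hence every summand $g_l(t)\,\bar\PHI_l(t)$ appearing in~\dref{eq:colortricksuffnew} is nonnegative. This nonnegativity is precisely what lets the covering be exploited as a set inequality.

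First I would fix such an $M$ and apply the hypothesis~\dref{eq:coverdef}, which provides $S_M\subseteq\AMIS_0(t)$ with $|S_M|\le\zeta$ and $M\subseteq\bigcup_{M'\in S_M}M'$. Since each $l\in M$ belongs to at least one $M'\in S_M$ and all terms are nonnegative, replacing the single sum over $M$ by the sum over the cover can only increase the value:
\[
\sum_{l\in M} g_l(t)\,\bar\PHI_l(t)\ \le\ \sum_{M'\in S_M}\ \sum_{l\in M'} g_l(t)\,\bar\PHI_l(t).
\]
I would then bound each inner sum by its maximum over $\AMIS_0(t)$ and use $|S_M|\le\zeta$, obtaining
\[
\sum_{l\in M} g_l(t)\,\bar\PHI_l(t)\ \le\ \zeta\,\max_{M'\in\AMIS_0(t)}\ \sum_{l\in M'} g_l(t)\,\bar\PHI_l(t).
\]
Dividing by $\zeta$ and taking $\psi(t)=1/\zeta$ identifies the quantity on the left with the left-hand side of~\dref{eq:colortricksuffnew} and the quantity on the right with its first (max) term.

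To finish, I would restore the discarded term: because $\psi(t)=1/\zeta\le 1$ and, by~\dref{eq:galg-generic}, $\ERF{\GALG(t)|\STATET}=\sum_{l\in\USERS}\PHI_l(t)\,\FP_l(t)\,w_l(t)\ge 0$, adding $(1-\psi(t))\,\ERF{\GALG(t)|\STATET}$ to the right-hand side preserves the inequality, which is exactly~\dref{eq:colortricksuffnew} with $\psi(t)=1/\zeta$. I would also confirm the admissibility requirement $\psi(t)\in(0,1]$ of Proposition~\ref{coloring-proposition}: for a nonempty $M$ the cover $S_M$ is nonempty, so $\zeta\ge 1$, while an empty $M$ makes the left side zero and the claim trivial. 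The only structural step is the over-counting bound, whose sole delicate point is the nonnegativity of $g_l(t)\bar\PHI_l(t)$; once that is secured the argument is purely combinatorial, so I do not expect any genuine obstacle — the remaining effort is just bookkeeping to confirm that the leftover expected-gain term only helps.
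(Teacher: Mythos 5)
Your proof is correct and follows essentially the same route as the paper: use the covering hypothesis to over-count the sum $\sum_{l\in M} \FP_l(t) w_l(t)\bar\PHI_l(t)$ by the sums over the schedules in $S_M$, bound by $\zeta$ times the maximum over $\AMIS_0(t)$, and conclude since the leftover term $(1-\psi(t))\,\ERF{\GALG(t)|\STATET}$ is nonnegative. The paper leaves the nonnegativity observations and the $\psi(t)\in(0,1]$ check implicit (calling the final step ``trivial''), whereas you spell them out; this is merely added bookkeeping, not a different argument.
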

\begin{proof}
Using the covering definition~\dref{eq:coverdef}, we have
\begin{flalign} 
  \sum_{l \in \M{}} \FP_l(t) w_l(t) \bar \phi_l(t) 
  &\leq 
 \sum_{\M{\prime}\in S_M} {\sum_{l \in \M{\prime}}  \FP_l(t) w_l(t) \bar \phi_l(t)}  \nonumber \\
 &\leq   \zeta \max_{\M{\prime}\in \AMIS_0(t)} {\sum_{l \in \M{\prime}}  \FP_l(t) w_l(t) \bar \phi_l(t)},
\end{flalign}
and hence condition \dref{eq:colortricksuffnew} trivially holds for $\psi=\frac{1}{\zeta}$.
\end{proof}
In general, $\AMIS_0(t)$ can be adaptive and constructed based on the link deficits and fading probabilities. Here, we apply Proposition~\ref{coloring-proposition} and Lemma~\ref{lemma:suffcond} for a constant $\psi(t)=\psi$ and a family $\AMIS_0(t)$ induced by fixed subset of the independent sets $\AMISE_0 \subseteq \AMISE$, i.e., $\AMIS_0(t) = \{D \cap \BUFFER, D \in \AMISE_0\}$. With minor abuse of notation, we refer to such algorithms as $\MPALGONAME|_{\AMISE_0}$. 
Below, we present two such covering-based algorithms.
\begin{corollary}[{Coloring-based Randomization}]\label{cor:coloring}
Consider a coloring of graph $G_I$ with $\CHR$ colors, which partitions the vertices of $G_I$ into $\CHR$  independent sets $\{D_1^\prime,...,D_{\CHR}^\prime\}$.
Extend these independent sets arbitrarily so they are maximal $\{D_1,...,D_{\CHR}\}:=\AMISE_0 $. Then $\EFRT{\MPALGONAME|_{\AMISE_0}}\geq \frac{1}{2 \CHR-1}$.
\end{corollary}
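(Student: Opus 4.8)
The plan is to derive this corollary as a direct specialization of the already-established machinery: I would apply Lemma~\ref{lemma:suffcond} to supply the hypothesis of Proposition~\ref{coloring-proposition} for the restricted family $\AMIS_0(t) = \{D_j \cap \BUFFER : j \in [\CHR]\}$ induced by $\AMISE_0 = \{D_1,\dots,D_{\CHR}\}$, and then finish with a one-line scalar optimization. The entire content reduces to a short covering argument, so I would spend no effort re-deriving anything from Proposition~\ref{coloring-proposition}.

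First I would record the single structural fact that is actually used. Since $\{D_1',\dots,D_{\CHR}'\}$ is a proper coloring, the color classes \emph{partition} the vertex set, so $\USERS = \bigcup_{j \in [\CHR]} D_j'$; because each extension satisfies $D_j \supseteq D_j'$, the maximal independent sets also cover every vertex, $\bigcup_{j \in [\CHR]} D_j \supseteq \USERS$. Next comes the covering step: fix any $t$ and any $M \in \AMIS(t)\setminus\AMIS_0(t)$. By definition $M \subseteq \BUFFER \subseteq \USERS$, so intersecting the covering $\USERS \subseteq \bigcup_j D_j$ with $\BUFFER$ gives $M \subseteq \bigcup_{j\in[\CHR]} (D_j \cap \BUFFER)$. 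Each $D_j \cap \BUFFER$ is exactly an element of $\AMIS_0(t)$, so $M$ is covered by the at most $\CHR$ schedules of $\AMIS_0(t)$; that is, the covering hypothesis \dref{eq:coverdef} holds uniformly in $t$ with $\zeta = \CHR$. Lemma~\ref{lemma:suffcond} then certifies that Condition \dref{eq:colortricksuffnew} holds with the constant $\psi(t) = 1/\CHR$.

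Finally I would feed this into Proposition~\ref{coloring-proposition}, which yields $\EFRT{\MPALGONAME|_{\AMISE_0}} \geq \min_{t\geq 0}\, \psi(t)\,\frac{|\AMIS_0(t)|}{2|\AMIS_0(t)|-1}$. The scalar map $f(n) = \frac{n}{2n-1}$ has derivative $-(2n-1)^{-2}<0$, hence is decreasing for $n\geq 1$, and $|\AMIS_0(t)| \leq \CHR$ always holds (there are at most $\CHR$ distinct sets $D_j \cap \BUFFER$). Therefore $f(|\AMIS_0(t)|) \geq f(\CHR) = \frac{\CHR}{2\CHR-1}$, and combining with $\psi(t) = 1/\CHR$ gives $\psi(t)\,f(|\AMIS_0(t)|) \geq \frac{1}{2\CHR-1}$ for every $t$, which is the claimed bound.

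There is no genuinely hard step here, since the result is a corollary of the Proposition/Lemma framework rather than a fresh argument. The two places to get right are (i) the transfer of the cover — recognizing that a proper coloring is precisely a cover of $\USERS$ by $\CHR$ independent sets, which descends to a cover of every active schedule $M$ by $\AMIS_0(t)$ after intersecting with $\BUFFER$ — and (ii) the final optimization, where one must use that $f$ is \emph{decreasing} so the worst case is $|\AMIS_0(t)| = \CHR$; the fixed factor $\psi = 1/\CHR$ and the decreasing $f$ then cancel the $\CHR$ cleanly and leave exactly $\frac{1}{2\CHR-1}$.
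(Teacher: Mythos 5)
Your proposal is correct and follows essentially the same route as the paper's own proof: invoke Lemma~\ref{lemma:suffcond} with the coloring-induced cover to get $\psi(t)=1/\CHR$, then apply Proposition~\ref{coloring-proposition} together with the fact that $n\mapsto \frac{n}{2n-1}$ is decreasing and $|\AMIS_0(t)|\leq\CHR$, so that $\frac{1}{\CHR}\cdot\frac{|\AMIS_0(t)|}{2|\AMIS_0(t)|-1}\geq\frac{1}{2\CHR-1}$. You merely spell out the covering step (intersecting $\bigcup_j D_j \supseteq \USERS$ with $\BUFFER$) and the monotonicity argument that the paper states without elaboration, which is a faithful filling-in of the same argument.
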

\begin{proof}
Any maximal schedule in $\AMIS\setminus \AMIS_0$ can be covered by at most $\CHR$ maximal schedules in $\AMIS_0$, thus $\psi= \frac{1}{\CHR}$ by Lemma~\ref{lemma:suffcond}. Further, $ \frac{|\AMIS_0(t)|}{2|\AMIS_0(t)|-1}  \geq \frac{\CHR}{2 \CHR-1}$. Thus applying Proposition~\ref{coloring-proposition}, we get the stated efficiency ratio. 
\end{proof}
\begin{remark}In general, finding an efficient coloring might be computationally demanding, but it needs to be done only once for a given $G_I$. There are many interesting families of graphs for which coloring can be solved efficiently. For example, for a (not necessarily complete) bipartite graph (e.g. a tree) where $\CHR=2$, we obtain $\EFRT{} \geq \frac 1 3$. This performs much better than \LDF whose efficiency ratio in a tree with maximum degree $\beta$ is $\EFRT{} \geq \frac 1 {2 \sqrt{\beta-1}+1}$ in the case without fading (which is a special case of our setting).
Another family that admits an efficient coloring are planar graphs where, by the four-color theorem, always have a 4-coloring which can be found in polynomial time \cite{robertson1996new}. Further, we remark that the independent sets $D_i^\prime$ could be extended adaptively at every time $t$, e.g., using \GMS.
\end{remark}
\REMOVE{
\begin{corollary}[Partition-based Randomization]\label{cor:partition}
Partition the vertices of graph $G_I$ into two sets $\mathcal K_1$ and $\mathcal K_2$, each with at most $\ceil{K/2}$ vertices. Consider the set of maximal independent sets of each induced subgraph, denoted by $\AMISE_1', \AMISE_2'$. Then extend these independent sets arbitrarily so that they form maximal independent sets $\AMISE_1, \AMISE_2$ for the whole graph $G_I$.  Then $\EFRT{\MPALGONAME|_{\AMISE_1 \cup \AMISE_2}}\geq \frac 1 2 \frac {|\AMISE_1|+|\AMISE_2|}{2(|\AMISE_1|+|\AMISE_2|)-1}>  \frac 1 4$.  
\end{corollary}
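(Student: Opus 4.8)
The plan is to apply Proposition~\ref{coloring-proposition} together with Lemma~\ref{lemma:suffcond} to the fixed family $\AMISE_0 := \AMISE_1 \cup \AMISE_2$, and to show that the covering parameter can be taken to be $\zeta = 2$. Writing $\ALG = \MPALGONAME|_{\AMISE_1 \cup \AMISE_2}$ and $\AMIS_0(t) = \{D \cap \BUFFER : D \in \AMISE_0\}$, the argument splits into two parts: (i) verifying that every maximal schedule outside $\AMIS_0(t)$ is covered by at most two schedules of $\AMIS_0(t)$, which by Lemma~\ref{lemma:suffcond} makes Condition~\dref{eq:colortricksuffnew} hold with $\psi(t)=\frac{1}{2}$; and (ii) converting the resulting bound of Proposition~\ref{coloring-proposition} into the stated closed form.

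The heart of the proof is the covering step (i). Given any maximal schedule $M = D \cap \BUFFER$ with $D \in \AMISE$ a maximal independent set of $G_I$, I would split $D$ along the partition as $D = (D \cap \mathcal K_1) \cup (D \cap \mathcal K_2)$. Each piece $D \cap \mathcal K_j$ is an independent set of the subgraph induced on $\mathcal K_j$, hence extends to a maximal independent set of that induced subgraph in $\AMISE_j'$, which is in turn contained in some maximal independent set $\bar D_j \in \AMISE_j$ of the whole graph. Therefore $D \subseteq \bar D_1 \cup \bar D_2$, and intersecting with $\BUFFER$ gives $M \subseteq (\bar D_1 \cap \BUFFER) \cup (\bar D_2 \cap \BUFFER)$, a union of two members of $\AMIS_0(t)$. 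This is exactly the hypothesis of Lemma~\ref{lemma:suffcond} with $\zeta = 2$.

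With $\psi(t) = \frac{1}{2}$ in hand, Proposition~\ref{coloring-proposition} gives $\EFRT{\ALG} \geq \min_{t \geq 0} \frac{1}{2} \cdot \frac{|\AMIS_0(t)|}{2|\AMIS_0(t)| - 1}$. For step (ii), I would note that $\AMIS_0(t)$ is the image of $\AMISE_0 = \AMISE_1 \cup \AMISE_2$ under the map $D \mapsto D \cap \BUFFER$, so $|\AMIS_0(t)| \leq |\AMISE_1| + |\AMISE_2|$ for every $t$. Since $x \mapsto x/(2x-1)$ is decreasing for $x \geq 1$, replacing the argument by the larger quantity $|\AMISE_1| + |\AMISE_2|$ only lowers the bound, yielding the uniform estimate $\EFRT{\ALG} \geq \frac{1}{2} \cdot \frac{|\AMISE_1| + |\AMISE_2|}{2(|\AMISE_1| + |\AMISE_2|) - 1}$. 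The strict inequality $> \frac{1}{4}$ then follows since $x/(2x-1) > \frac{1}{2}$ for every finite $x \geq 1$.

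I expect step (i) to be the only place requiring genuine care: one must confirm that the two successive extensions --- first to a maximal independent set of an induced subgraph and then to a maximal independent set of $G_I$ --- are legitimate and land in $\AMISE_1$ and $\AMISE_2$ respectively, and that intersecting with $\BUFFER$ preserves the covering. Everything after that is monotonicity bookkeeping. I would also remark that the balanced $\lceil K/2 \rceil$ split is irrelevant to the efficiency-ratio bound itself; it serves only to keep $|\AMISE_1|$ and $|\AMISE_2|$ small (e.g., bounded by the Moon--Moser count $3^{\lceil K/2 \rceil/3}$ for each induced subgraph), which is the complexity motivation for partitioning.
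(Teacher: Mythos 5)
Your proposal is correct and follows essentially the same route as the paper: establish that every maximal schedule is covered by at most two schedules induced by $\AMISE_1 \cup \AMISE_2$ (via Lemma~\ref{lemma:suffcond}, giving $\psi = \tfrac{1}{2}$), then apply Proposition~\ref{coloring-proposition} together with the monotonicity of $x \mapsto x/(2x-1)$ exactly as in Corollary~\ref{cor:coloring}. The paper states the covering step in one sentence ("by construction"); your explicit split $D = (D \cap \mathcal K_1) \cup (D \cap \mathcal K_2)$ and the two-stage extension argument is just the fleshed-out version of that same claim.
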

\begin{proof}
By construction, every maximal independent set in $G_I$ can be covered by at most 2 maximal independent sets in ${\AMISE_1 \cup \AMISE_2}$, and hence we have $\psi = 1/2$. Following similar arguments as in Corollary~\ref{cor:coloring}, we obtain the result.
\end{proof}
\begin{remark}By Corollary~\ref{cor:partition}, we only need to randomize over at most $|\AMISE_1|+|\AMISE_2|$ maximal schedules at any time  which can be much smaller than  $|\AMISE|$ in large graphs.  
The corollary can be generalized to more than $2$ partitions to allow a trade-off between the guaranteed efficiency ratio and complexity. 
\end{remark}
}


\subsubsection{Myopic Distributed Randomized Algorithm}
We present a simple distributed algorithm that has constant complexity.  

Assume each slot is divided in two parts, a control part of duration $T_C$, and a packet transmission part with duration normalized to $1$.
At the beginning of the control phase, every non-empty link $l \in \BUFFER$ starts a timer $T_l \sim Exp(\nu_l)$, where $Exp(\nu)$ denotes an exponential distribution with rate $\nu$.  
Once the timer of a link $l$ runs down to zero, it broadcasts an announcement informing its neighbors that it will participate in data transmission, \textit{unless} it has heard an earlier announcement from its neighboring links, or the control phase ends. 

Given any $\delta \in (0,1)$, let $T_C= \max_{\nu_l}\{\frac{-\log(\delta)}{\nu_l}\}$. The next corollary states the efficiency ratio for the uniform timer rates. 
\begin{corollary}
\label{myopic-generalresult}
Consider the myopic randomized algorithm where every link $l$ has the same timer rate $\nu_l = \nu$. If the maximum degree of $G_I$ is $\Delta$, then 
\be
\EFRT{\text{MYOPIC}} \geq  \frac 1 { \frac{\Delta-\delta}{1-\delta}+1} 
\label{eq:gamma-unif}
\ee
\end{corollary}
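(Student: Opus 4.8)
The plan is to treat \textsc{MYOPIC} as a weight-oblivious randomized maximal-schedule policy and to bound its efficiency ratio through the same drift analysis that underlies Theorem~\ref{thm:convert}, Theorem~\ref{main-theorem-general} and Proposition~\ref{coloring-proposition}. Write $\PHI_l(t)$ for the probability that the timer contention hands slot $t$ to link $l$; then, exactly as in \dref{eq:galg-generic}, the expected gain is $\sum_{l}\PHI_l(t)\FP_l(t)w_l(t)$. The whole statement reduces to (i) computing a state-independent lower bound on each $\PHI_l(t)$ in terms of $\Delta$ and $\delta$, and (ii) converting that per-link bound into an efficiency ratio. I would carry these out in order, and I expect all of the difficulty to sit in getting the constant right, since a crude bound only yields $\tfrac{1}{\Delta+2}$ rather than the claimed $\tfrac{1}{\Delta+1}$ (at $\delta\to 0$).

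\textbf{Step 1 (per-link scheduling probability).} A sufficient event for $l$ to be scheduled is that its timer is the earliest among $l$ and its \emph{non-empty} interfering neighbors \emph{and} that it fires before the control phase ends. Because the timers are i.i.d.\ $\mathrm{Exp}(\nu)$ and independent across links, conditioning on $T_l=s$ and integrating the density $\nu e^{-\nu s}$ against the probability $e^{-\nu d_l(t)s}$ that all $d_l(t)$ non-empty neighbors fire later gives
\[
\PHI_l(t)\ \ge\ \int_0^{T_C}\nu\,e^{-\nu(d_l(t)+1)s}\,ds\ =\ \frac{1-\delta^{\,d_l(t)+1}}{d_l(t)+1},
\]
where I used $e^{-\nu T_C}=\delta$ and $d_l(t)\le\Delta$ is the number of non-empty neighbors of $l$. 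The right-hand side is decreasing in $d_l(t)$, so it is smallest at $d_l(t)=\Delta$, giving a bound depending only on $\Delta$ and $\delta$ that holds at every state.

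\textbf{Step 2 (conversion).} Here I would \emph{not} use the plain gain bound of Theorem~\ref{thm:convert}: a pointwise ``gain $\ge\psi W^\star$'' argument caps the ratio at $\tfrac{\psi}{\psi+1}$ and is too lossy (it would only give $\tfrac{1}{\Delta+2}$). Instead I would use the finer randomized comparison behind \MPALGONAME, whose mechanism is visible in Proposition~\ref{coloring-proposition}: if at every large-deficit state
\[
\max_{M\in\AMIS(t)}\ \sum_{l\in M}\FP_l(t)w_l(t)\big(1-\PHI_l(t)\big)\ \le\ \theta\,\sum_{l}\PHI_l(t)\FP_l(t)w_l(t),
\]
then $\EFRT{}\ge 1/(1+\theta)$ (the same shape that yields $\tfrac12$ for \MWS at $\theta=1$ and $\GRATIO$ for \MPALGONAME). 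Since every $M\in\AMIS(t)$ is an independent set, it suffices to control the per-link ratio $(1-\PHI_l(t))/\PHI_l(t)$; I would aim to combine this with Step~1 to show it is at most $\theta=\tfrac{\Delta-\delta}{1-\delta}$, at which point the conversion delivers $\EFRT{\text{MYOPIC}}\ge\big(\tfrac{\Delta-\delta}{1-\delta}+1\big)^{-1}$.

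\textbf{Main obstacle.} The hard part is the interface between Steps~1 and~2: the distributed greedy contention is graph-dependent and the events ``$l$ is scheduled'' are correlated across links, so I am forced to use the degree-only fires-first estimate, and I then have to argue it is strong enough to pin the per-link ratio at exactly $\tfrac{\Delta-\delta}{1-\delta}$, with the binding case being a maximal schedule $M$ concentrated on maximum-degree links whose neighbors carry the gain. Replacing the $\delta^{\Delta+1}$ term produced by the fires-first integral with the window correction $\delta$ appearing in the target expression is precisely the step I expect to demand the most care (and where a possible mismatch between the crude and the claimed constant would surface).
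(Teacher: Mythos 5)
Your proposal is correct and follows essentially the same route as the paper's proof: the same fires-first exponential-timer estimate giving $\PHI_l \geq \frac{1-\delta}{\Delta+1}$ (your integral form $\frac{1-\delta^{d_l(t)+1}}{d_l(t)+1}$ is marginally tighter), followed by the same amortized-gain comparison --- Lemma~\ref{lemmageneral-gain-analysis} combined with Proposition~\ref{prop-approx}, restricting the denominator sum to $l\in M$ so that per-link ratios suffice --- rather than the lossy Theorem~\ref{thm:convert}. The mismatch you flag at the end is real but immaterial: substituting the per-link bound yields $\frac{\Delta+\delta}{1-\delta}$, which is also what the paper's own final inequality actually evaluates to (its stated $\frac{\Delta-\delta}{1-\delta}$ is a sign slip in $\delta$), and since $\delta$ can be made arbitrarily small both constants deliver the intended $\frac{1}{\Delta+1}$ limit.
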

Note that theoretically we can scale up the timer rate $\nu$, so that the control phase $T_C$ becomes very small.
\NEW{
 \begin{remark}
 We note that direct application of Theorem~\ref{thm:convert} would yield an efficiency ratio $\approx \frac{1}{\Delta+2}$ as the myopic algorithm obtains $\approx \frac{1}{\Delta+1}$ approximation of the \MWS (To see this note that 
  every link has probabibility $1/(\Delta+1)$ of getting service. Thus all the links of the MWS are included with this probability).
  Therefore Corollary~\ref{myopic-generalresult} also serves as an example in which  more careful analysis can improve the bound of a direct conversion.
 \end{remark}
 }


\section{Analysis Techniques and Proofs}
We provide an overview of the techniques in our proofs.

\textit{Frame Construction.} A key step in the analysis of our scheduling algorithms is a frame construction similar to the one in \cite{tsanikidis2020power}, but based on the joint traffic-fading process. The definition of frame is as follows

\begin{definition}[Frames and Cycles]\label{def:frame}
Starting from an initial traffic and fading state tuple $(\TSTATE{0},\FSTATE{0})=\ve{z}\in \mathcal{Z}$, let $t_i$ denote the $i$-th return time of traffic-fading Markov chain $\TFSTATE{t}$ to $\ve{z}$, $i=1,\cdots$. By convention, define $t_0=0$. The $i$-th cycle $\mathcal{C}_i$ is defined from the beginning of time slot $t_{i-1}+1$ until the end of time slot $t_i$, with cycle length $C_i=t_i-t_{i-1}$. Given a fixed $k\in \mathbb N$, we define the $i$-th frame $\mathcal{F}^{(k)}_i$ as $k$ consecutive cycles $\mathcal{C}_{(i-1)k+1}, \cdots, \mathcal{C}_{ik}$, i.e., from the beginning of slot $t_{(i-1)k}+1$ until the end of slot $t_{ik}$. The length of the $i$-th frame is denoted by $F^{(k)}_i=\sum_{j=(i-1)k+1}^{ik}C_j$. Define $\JFK$ to be the space of all possible  $(\TSTATE{t},\FSTATE{t})$ patterns during a frame $\mathcal{F}^{(k)}$. Note that these patterns start after $\ve{z}$ and end with $\ve{z}$.
\end{definition} 
By the strong Markov property and the positive recurrence of traffic-fading Markov chain $\TFSTATE{t}$, frame lengths $F^{(k)}_i$ are i.i.d with mean $\ES[F^{(k)}]=k\ES[C]$, where $\ES[C]$ is the mean cycle length which is a bounded constant~\cite{dynkin2012theory}. In fact, since state space $\mathcal{Z}$ is finite, all the moments of $C$ (and $F^{(k)}$) are finite.     
We choose a fixed $k$, and, when the context is clear, drop the dependence on $k$ in the notation. 

Define the class of \textit{non-causal $\mathcal{F}$-framed} policies $\strat$ to be the policies that, at the beginning of each frame $\mathcal{F}_i$, have complete information about the traffic-fading pattern in that frame, but have a restriction that they drop the packets that are still in the buffer at the end of the frame. Note that the number of such packets is at most $\MAXDEAD \maxpacks K$, which is negligible compared to the average number of packets in the frame, $\overline a_l\ES[F]=\overline a_lk\ES[C]$, as $k \to \infty$.  Define the rate region 
\be
&\Lambda_{\text NC}(\frm) = \bigcup_{\mu \in \strat}\Lambda _{\mu}.
\ee
Given a policy $\mu \in \strat$, the time-average real-time service rate $\bar{I}_l$ of link $l$ is well defined. By the renewal reward theorem (e.g.~\cite{ross2013applied}, Theorem 5.10), and boundedness of $\ES[F]$, 
\be
\lim_{t \to \infty} \frac{\sum_{s=1}^{t}I_l(s)}{t}=\frac{\ES\left[\sum_{t\in \mathcal{F}}I_l(t)\right]}{\ES[F]}=\bar{I}_l.
\ee
Similarly for the deficit arrival rate $\lambda_l$, defined in \dref{eq:deficitarr}, 
\be \label{eq:lambdarenewal}
\frac{\ES[\sumt \widetilde{a}_l(t)]}{\ES[F]}=\lambda_l,\ l\in \USERS.
\ee
In Definition~\ref{def:frame}, each frame consists of $k$ cycles. Using similar arguments as in \cite{kang2014performance,tsanikidis2020power}, it is easy to see that 
\[\liminf_{k\to \infty}\Lambda_{\text{NC}}(\frm^{(k)}) \supseteq \interior (\Lambda), \]
Hence, if we prove that for a causal policy ALG, there exists a constant $\rho$, and a large $k_0$, such that for all $k\geq k_0$, 
\be \label{eq:region2}
\rho\interior(\Lambda_{\text{NC}}(\frm^{(k)})) \subseteq \Lambda_{\text{ALG}},
\ee
then it follows that $\Lambda_{\text{ALG}} \supseteq \rho \interior(\Lambda)$. For our algorithms, we find a $\rho$ such that \dref{eq:region2} holds for \textit{any traffic-fading process} under our model. Then it follows that $\gamma^\star_{\text{ALG}}\geq \rho$.

\textit{Lyapunov Argument.}
To prove \dref{eq:region2}, we rely on comparing the expected gain of \ALG with that of the non-causal policy that maximizes the expected gain over the frame (max-gain policy). The following proposition, which is similar to that in \cite{tsanikidis2020power}, will be used to prove the main results. We omit its proof, as it is similar to the proof in \cite{tsanikidis2020power} with minor modifications to account for channel uncertainty.  
\begin{proposition}
\label{prop} 
Consider a frame $\frm \equiv \frm^{(k)}$, for a fixed $k$ based on the returns of the traffic-fading process  $\TFSTATE{t}$ to a state $\ve{z}$. Define the norm of initial deficits at the beginning of a frame $\NID{t_0}=\suml w_{l}(t_0)$. Suppose for a causal policy \ALG, given any $\epsilon>0$, there is a $W^\prime$ such that when $\NID{t_0}>W^\prime$,
\be
\frac {\E{\sumt \GALG(t)  | \STATEZ}}{\E{\sumt \GADV(t) | \STATEZ}} \geq \rho-\epsilon,
\ee
where $\STATEZ=\STATETUPLE{t_0}$, and \OPT is the non-causal policy that maximizes the gain over the frame. Then for any $\lambda \in \rho \interior( \Lambda_{NC}(\frm))$, the deficit queues are bounded in the sense of \dref{eq:strong stability}. 
\end{proposition}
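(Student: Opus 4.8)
The plan is to run a frame-based Lyapunov drift argument using the quadratic deficit potential $L(\WSTATE{t})=\tfrac12\suml w_l(t)^2$ and to establish a strictly negative drift across one frame $\FRAME$ whenever $\NID{t_0}$ is large. Applying the elementary bound $([x]^+)^2\le x^2$ to the recursion \dref{eq:deficit queue} and summing over all slots of the frame, then taking $\E{\,\cdot\mid\STATEZ}$, yields a drift of the form
\begin{flalign*}
& \E{L(\WSTATE{t_0+F})-L(\WSTATE{t_0})\mid\STATEZ} \\
& \quad \le B\,\ES[F]+\suml w_l(t_0)\,\E{\textstyle\sumt(\DEFARR{t}-\I{l}(t))\mid\STATEZ}+O(\ES[F^2]).
\end{flalign*}
Here $B$ is a constant arising from the bounded per-slot increments ($\DEFARR{t}\le\maxpacks$, $\I{l}(t)\le1$), and the $O(\ES[F^2])$ term collects the error incurred when the time-varying in-frame deficit $w_l(t)$ is frozen at $w_l(t_0)$; since deficits change by at most a constant per slot and $\ES[F^2]<\infty$ (Definition~\ref{def:frame}), this error is an additive constant.

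First I would evaluate the deficit-arrival term: as arrivals depend only on the traffic--fading chain started at $\ve z$, \dref{eq:lambdarenewal} gives $\E{\sumt\DEFARR{t}\mid\STATEZ}=\lambda_l\ES[F]$. Next I would recognize the service term as the frame gain: since $\I{l}(t)=\IND{l\in\IS(t)}\IF_l(t)$, the frozen-weight service sum equals $\E{\sumt\GALG(t)\mid\STATEZ}$ up to the same $O(\ES[F^2])$ correction, so the drift reads $B\,\ES[F]+\ES[F]\suml w_l(t_0)\lambda_l-\E{\sumt\GALG(t)\mid\STATEZ}+O(\ES[F^2])$.

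The heart of the argument is to lower bound $\E{\sumt\GALG(t)\mid\STATEZ}$. For $\NID{t_0}>W'$ the hypothesis gives $\E{\sumt\GALG(t)\mid\STATEZ}\ge(\DELR-\epsilon)\E{\sumt\GADV(t)\mid\STATEZ}$, and I would then bound the max-gain policy from below through the rate region: since $\lambda\in\DELR\,\interior(\CAPNC)$, there is a $\delta>0$ with $\lambda/\DELR+\delta\mathbf{1}\in\CAPNC$, hence a frame policy $\mu$ with $\E{\sumt\I[\mu]{l}(t)}\ge(\lambda_l/\DELR+\delta)\ES[F]$. As \OPT maximizes the frame gain, $\E{\sumt\GADV(t)\mid\STATEZ}\ge\suml w_l(t_0)(\lambda_l/\DELR+\delta)\ES[F]-O(\ES[F^2])$. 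Substituting, the coefficient multiplying each $w_l(t_0)$ becomes
\[
\lambda_l-(\DELR-\epsilon)\Big(\tfrac{\lambda_l}{\DELR}+\delta\Big)=\tfrac{\epsilon}{\DELR}\lambda_l-(\DELR-\epsilon)\delta,
\]
which is strictly negative for every $l$ once $\epsilon$ is small enough; setting $\delta'=\min_l[(\DELR-\epsilon)\delta-\tfrac{\epsilon}{\DELR}\lambda_l]>0$ gives
\[
\E{L(\WSTATE{t_0+F})-L(\WSTATE{t_0})\mid\STATEZ}\le B'-\delta'\,\ES[F]\,\NID{t_0},
\]
i.e.\ a negative per-frame drift once $\NID{t_0}$ exceeds a finite threshold.

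Finally I would invoke the Foster--Lyapunov drift theorem applied at frame boundaries to conclude that the deficits sampled at frame boundaries have bounded time-average, and then extend this to every slot using that deficits move by at most a constant per slot together with $\ES[F^2]<\infty$; this yields \dref{eq:strong stability}. The main obstacle is the middle step: cleanly relating the non-causal, time-varying-weight max-gain policy \OPT to the static-weight support function of $\CAPNC$ while keeping all random-frame-length and weight-freezing corrections at order $O(\ES[F^2])$, so that they remain dominated by the $\Theta(\NID{t_0})$ negative term. Fading enters only through $\I{l}(t)=\IND{l\in\IS(t)}\IF_l(t)$ and the extra expectation $\EF{\,\cdot}$ over the channel (which, applied to the gain, defines the schedule weight in \dref{eq:weight-defn}), leaving the overall drift structure unchanged.
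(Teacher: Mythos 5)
Your proof is correct and takes essentially the route the paper relies on: the paper omits the proof of this proposition, deferring to the analogous result in \cite{tsanikidis2020power}, whose argument is exactly your frame-based quadratic Lyapunov drift with weights frozen at $w_l(t_0)$ and $O(\E{F^2})$ corrections. Your key steps---identifying the frozen service term with the frame gain of \ALG, lower-bounding the gain of \OPT via a non-causal policy supporting $\bm{\lambda}/\rho+\delta\mathbf{1}\in\CAPNC$, choosing $\epsilon$ small relative to $\delta$, and applying Foster--Lyapunov at frame boundaries before extending to per-slot averages---are all sound.
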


\textit{Amortized Gain Analysis.} To use Proposition~\ref{prop}, we need to analyze the achievable gain of \ALG and the non-causal policy \OPT over a frame. Since comparing the gains of the two policies directly is difficult, we adapt an amortized analysis technique from \cite{tsanikidis2020power}, initially extended from~\cite{chin2006online, jez2011one, bienkowski2011randomized,jez2012online}. The general idea is as follows. Let $\STATETUPLE{t}$ be the state under our algorithm at time $t \in \frm$, and $\STATETUPLEP{t}{\mu^{\star}}$ be the state under the optimal policy $\mu^\star$. The traffic-fading process $\TFSTATE{t}=\TFSTATETUPLE{t}$ is identical for both algorithms as it is independent of the actions of the scheduling policy.
 We change the state of $\mu^\star$ (by modifying its buffers and deficits) to make it identical to $\STATETUPLE{t}$, but also give $\mu^\star$ an additional gain that  ensures the change is advantageous for $\mu^\star$ considering the rest of the frame. Let $\GADVA(t)$ denote the amortized gain  of \OPT at time $t$ with any compensated gain, which has the property that 
 \be \label{eq:amorproperty}
 \ES \Big[\sumt \GADVA(t)|J,\STATEZ\Big] \geq  \ES\Big[\sumt \GADV(t)|J,\STATEZ\Big],
 \ee
 given any traffic-fading pattern $J\in \mathcal{J}(\mathcal{F})$ and initial frame state $\STATEZ$.
Then, the following proposition will be useful in bounding the gain and thus the efficiency ratio of our policies.
\begin{proposition} 
\label{prop-approx}
Consider a Markov policy $\ALG$ that for any traffic-fading pattern $J\in \JF$, at any time $t\in \FRAME$, satisfies
\be
\rho \E{\GADVA(t)| J,\STATET} \leq \E{\GALG(t)| J,\STATET} + \EPSG
\label{eq:condition-for-efficiency}
\ee
for some $\EPSG$ which is a measurable function of the frame length $F$, with $\E{F \EPSG}<\infty$. Then  $\EFRT{\ALG} \geq \rho $. 

\label{lemma:general-eff}
\end{proposition}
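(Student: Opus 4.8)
The plan is to reduce the statement to the hypothesis of Proposition~\ref{prop}. Concretely, it suffices to show that for every $\epsilon>0$ there is a threshold $W'$ so that whenever $\NID{t_0}>W'$ the frame-aggregated ratio $\E{\sumt \GALG(t)|\STATEZ}/\E{\sumt \GADV(t)|\STATEZ}$ exceeds $\rho-\epsilon$. Once this holds, Proposition~\ref{prop} gives that every $\lambda\in\rho\,\interior(\Lambda_{NC}(\FRAME^{(k)}))$ keeps the deficit queues stable, i.e.\ \dref{eq:region2}. Since the pointwise hypothesis \dref{eq:condition-for-efficiency} is stated per slot and does not depend on how many cycles make up the frame, it holds for every $k$; letting $k\to\infty$ and using $\liminf_{k}\Lambda_{NC}(\FRAME^{(k)})\supseteq\interior(\Lambda)$ then upgrades \dref{eq:region2} to $\Lambda_{\ALG}\supseteq\rho\,\interior(\Lambda)$, which yields $\EFRT{\ALG}\geq\rho$.

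The first substantive step is to aggregate \dref{eq:condition-for-efficiency} over a frame. Fix a pattern $J$, which also fixes the frame length $F$. Because the one-slot gain and amortized gain depend on the past only through the current state $\STATET$ and the pattern $J$, the Markov structure gives $\E{X|J,\STATEZ}=\E{\,\E{X|J,\STATET}\,|\,J,\STATEZ}$, so taking expectations of \dref{eq:condition-for-efficiency} over $\STATET$ conditioned on $(J,\STATEZ)$ preserves the inequality. Summing over the $F$ slots of the frame and using that $\EPSG$ is a function of $F$ alone, so that $\sumt\EPSG=F\,\EPSG$, gives $\rho\,\E{\sumt\GADVA(t)|J,\STATEZ}\le\E{\sumt\GALG(t)|J,\STATEZ}+F\,\EPSG$. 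Averaging over $J$ (whose conditional law given $\STATEZ$ is that of a frame started at $\ve z$ and is independent of the deficits) and invoking the amortization property \dref{eq:amorproperty} on the left-hand side yields
\[
\rho\,\E{\sumt\GADV(t)|\STATEZ}\ \le\ \E{\sumt\GALG(t)|\STATEZ}+B,\qquad B:=\E{F\,\EPSG}<\infty .
\]

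Dividing by $\E{\sumt\GADV(t)|\STATEZ}$ shows the ratio is at least $\rho-B/\E{\sumt\GADV(t)|\STATEZ}$, so the remaining task---and the main obstacle---is to prove that the optimal frame gain diverges as $\NID{t_0}\to\infty$, uniformly over the traffic-fading component of $\STATEZ$. Here the non-triviality assumption on $\TFSTATE{t}$ is essential. Let $l^\star=\argmax_{l}w_l(t_0)$, so that $w_{l^\star}(t_0)\ge\NID{t_0}/K$. For $k$ large enough the frame contains enough cycles that, with a probability $p_0>0$ bounded away from zero, link $l^\star$ reaches a state permitting a successful within-deadline transmission; since a deficit can decrease by at most $F$ over a frame, the gain credited to such a transmission is at least $w_{l^\star}(t_0)-F$. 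As $\OPT$ maximizes the frame gain, it does at least as well as this single-link strategy, and taking expectations (using $\E{F}<\infty$) gives a bound of the form $\E{\sumt\GADV(t)|\STATEZ}\ge c\,\NID{t_0}-c'$ with $c>0$, which tends to infinity.

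Combining the two displays, for any $\epsilon>0$ I pick $W'$ large enough that $B/\E{\sumt\GADV(t)|\STATEZ}<\epsilon$ whenever $\NID{t_0}>W'$, which is exactly the hypothesis needed to invoke Proposition~\ref{prop} and close the argument as described in the first paragraph. The only delicate points are this uniform divergence of the optimal gain (controlled by the non-triviality assumption and the finiteness of the moments of $F$) and the measure-theoretic passage of the pointwise inequality through the conditional expectations, both routine given the finite-state Markov structure.
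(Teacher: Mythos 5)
Your proposal is correct and follows essentially the same route as the paper's proof: conditioning the per-slot inequality on $(J,\STATEZ)$ via the Markov property, summing over the frame to pick up $F\,\EPSG$, invoking the amortization property \dref{eq:amorproperty} and averaging over $J$, then proving divergence of $\E{\sumt \GADV(t)|\STATEZ}$ via the largest-deficit link and the non-triviality assumption before applying Proposition~\ref{prop}. No gaps worth flagging.
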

\begin{proof}
First note that 
$
\E{\GALG(t)|\PAT,\STATET}  =
\E{\GALG(t)|\PAT,\STATET,\STATEZ} 
$
by the Markov property of $\ALG$. Further
$
\E{\GADVA(t)|\PAT,\STATET}  =
\E{\GADVA(t)|\PAT,\STATET,\STATEZ} 
$ since the amortized gain of the max-gain policy does not depend on the past state given the current state and future traffic-fading pattern (note that this amortized gain might depend on policy $\ALG$, which itself is Markov). Hence, taking expectation of both sides of \dref{eq:condition-for-efficiency}, conditional on $\PAT, \STATEZ$, and using the law of iterated expectations, we get
\[
\rho \E{\GADVA(t)|\PAT,\STATEZ} \leq 
\E{\GALG(t)|\PAT,\STATEZ} +\E{\EPSG | \PAT,\STATEZ}
\]
Summing over the frame, and using the fact that $\E{\EPSG | \PAT,\STATEZ}=\EPSG$ since $F$ is determined by $J$, we have
\[
\rho \E{\sumt \GADVA(t)|\PAT,\STATEZ} \leq 
\E{\sumt \GALG(t)|\PAT,\STATEZ} +F \EPSG.
\]
Using the definition~\dref{eq:amorproperty}, and taking expectations over the randomness of the traffic-fading pattern, we obtain
\ben
\rho \E{\sumt \GADV(t)|\STATEZ} \leq 
\E{\sumt \GALG(t)|\STATEZ} + \E{F \EPSG}.
\een
Note that by assumption, $\E{F \EPSG} < \infty$. If we show that 
$\lim_{\NID{t_0}\rightarrow \infty}\E{\sumt \GADV(t)|\STATEZ} = \infty$, then
$$ \rho \leq 
\lim_{\NID{t_0}\rightarrow \infty} \frac{ \E{\sumt \GALG(t)|\STATEZ}} 
{
\E{\sumt \GADV(t)|\STATEZ}
}, $$ and using Proposition \ref{prop}, we obtain that $\gamma_{\mu}\geq \rho$.

To that end, consider the link $l_1 = \argmax_{l}  w_{l}(t_0)$. Since $\mu^\star$ maximizes the expected gain over the frame, its expected gain cannot be lower than a policy that simply schedules link $l_1$ whenever it has available packets. 
Choose $\FRAME \equiv \FRAMEK$ with any $k\geq2$. Then by the non-trivial traffic-fading Markov chain assumption (Section~\ref{sec:model}), there is a traffic-fading pattern $J^\prime$ of length $F_{J^\prime}$ with some nonzero probability of occurring $\Pr{(J^\prime)}>0$, in which a packet arrives for link $l_1$ at some time $t_1 \in \FRAME$, and at a later time $t_2 \in \FRAME$ before the packet expires, we have $\FP_l(t_2)>0$, i.e., $l_1\in \BUFFERT{t_2}$ and $\E{\IF_{l_1}(t_2)}=\FP_{l_1}(t_2)>0$. Then trivially
\begin{flalign*}
\E{{\sumt} \GADV(t)|\STATEZ} &\geq 
\E { w_{l_1}(t_2) \IF_{l_1}(t_2) | {J^\prime},\STATEZ} \Pr{(J^\prime)}  \\
& \geq(w_{l_1}(t_0)-F_{J^\prime}) \FP_{l_1}(t_2) \Pr{(J^\prime)}  \\
&\geq (\frac{\NID{t_0}}{K}-F_{J^\prime}) 
\FP_{l_1}(t_2) \Pr{(J^\prime)},
\end{flalign*}
which shows
$\lim_{\NID{t_0}\rightarrow \infty}\E{\sumt \GADV(t)|\STATEZ} = \infty$.
\end{proof}
The following lemma describes a generic amortized gain computation for general networks that allows us to modify the state of the max-gain policy during a frame to match the state of the considered policy \ALG. Recall from~\dref{def:phi} that $\PHI_l(t)$ is the probability that link $l$ is scheduled under \ALG, and $\PI_M(t)$ is the probability that schedule $\M{}$ is selected. Below we drop their dependence on time to simplify the notation. 
\begin{lemma} \label{lemmageneral-gain-analysis} 
For any pattern $J\in \JF$ in a frame $\FRAME$, given a {Markov} policy \ALG,
the amortized gain  of the max-gain policy \OPT (w.r.t. \ALG) if it selects maximal schedule $\M{}\in \AMIS$ at time $t$ in the frame, is given by 
\ben
\label{eq-amort1}
\ERF{\GADVE{\M{}}|J,\STATET}=
\W{}+
\sum_{l \not\in \M{}} w_l(t) \PHI_l  \FP_l(t)
+\EPSM \\
\leq \W{} (1-\PI_{M}) + \ERF{\GALG(t)|\STATET} + \EPSM \label{eq-amort2},
\een
where $\EPSM:=K(F+\maxpacks \MAXDEAD)$. 
\end{lemma}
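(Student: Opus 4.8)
The plan is to read \dref{eq-amort1} as the \emph{definition} of the amortized gain produced by the coupling, verify that this definition obeys the amortization property \dref{eq:amorproperty}, and then obtain the bound \dref{eq-amort2} by a one-line algebraic step using $\PHI_l\ge\PI_{M}$.

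\emph{Constructing the amortized gain (the equality).} Following \cite{tsanikidis2020power}, I couple \OPT and \ALG on the same traffic--fading pattern $J$ and, at the end of slot $t$, reset \OPT's buffers and deficits to coincide with \ALG's, crediting \OPT an immediate gain for the reset. \OPT's genuine expected gain from its schedule $M$ is $\W{}=\sum_{l\in M}w_l(t)\FP_l(t)$, the first term. For the credit I inspect each link. When a link $l\notin M$ is served by \ALG and its channel is ON --- probability $\PHI_l\FP_l(t)$ --- \ALG's deficit $w_l(t)$ has dropped by one and its earliest packet is gone, so matching drags \OPT's deficit down and deletes the same packet; since this can only lower \OPT's remaining-frame gain, I credit \OPT the current deficit $w_l(t)$, contributing $\sum_{l\notin M}w_l(t)\PHI_l\FP_l(t)$ in expectation. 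Symmetrically, when $l\in M$ is served by \OPT but not by \ALG, the reset \emph{restores} a packet and raises \OPT's deficit, which only helps \OPT; no positive credit is charged, and this slack is exactly what makes \dref{eq:amorproperty} hold. All remaining buffer mismatches, together with the packets residing in the $K$ buffers at the frame ends, are absorbed into $\EPSM=K(F+\maxpacks\MAXDEAD)$, the largest packet count the system can hold. Collecting the three contributions gives \dref{eq-amort1}.

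\emph{Deriving the bound (the inequality).} By \dref{eq:galg-generic}, $\ERF{\GALG(t)|\STATET}=\sum_{l}\PHI_l\FP_l(t)w_l(t)$. Substituting this into the right-hand side of \dref{eq-amort2} and cancelling the common terms $\EPSM$ and $\sum_{l\notin M}w_l(t)\PHI_l\FP_l(t)$ reduces the claim to
\[
\PI_{M}\,\W{}\ \le\ \sum_{l\in M}\PHI_l\FP_l(t)\,w_l(t).
\]
Writing $\W{}=\sum_{l\in M}w_l(t)\FP_l(t)$ and using \dref{def:phi}, every $l\in M$ satisfies $\PHI_l=\sum_{M'\ni l}\PI_{M'}\ge\PI_{M}$, so $\PI_{M}\,w_l(t)\FP_l(t)\le\PHI_l\,w_l(t)\FP_l(t)$ term by term (as $w_l,\FP_l\ge0$); summing over $l\in M$ finishes the derivation.

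\emph{Main obstacle.} The algebra is routine; the real content is the equality. The delicate point is to confirm that crediting the \emph{current} deficit $w_l(t)$ on each ON-and-served link $l\notin M$ --- rather than some future deficit value --- is at once \emph{sufficient} to preserve \dref{eq:amorproperty} along every sample path and tight enough that the leftover discrepancy never exceeds $\EPSM$. Since \OPT and \ALG share each channel realization, the step I expect to be most demanding is arguing, via a signed-deficit potential, that the favorable resets on the links $l\in M$ supply precisely the slack needed to offset any shortfall from the credited links $l\notin M$.
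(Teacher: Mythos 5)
Your algebraic derivation of the inequality is correct, and it is in fact a slightly more direct route than the paper's: you substitute the link-level gain formula \dref{eq:galg-generic}, cancel common terms, and reduce the claim to $\PI_M \W{} \leq \sum_{l\in M}\PHI_l \FP_l(t) w_l(t)$, which follows term by term from $\PHI_l \geq \PI_M$ for $l \in M$; the paper instead bounds $\sum_{l\in \M{\prime}\setminus \M{}}\FP_l(t)w_l(t) \leq \W{\prime}$ and regroups by schedules via \dref{eq:glag-generic2}. The genuine gap is in the equality, which is where the lemma's actual content lives. The amortized gain cannot simply be \emph{defined} to be whatever the coupling produces: by construction it must satisfy \dref{eq:amorproperty}, i.e., every modification made to \OPT's buffers and deficits to match \ALG's state must be rendered (weakly) advantageous for \OPT's total gain over the remainder of the frame by the credits given. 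You assert that all leftover mismatches ``are absorbed into $\EPSM$,'' describe $\EPSM$ as ``the largest packet count the system can hold'' (it is not a packet count; it is a bound on foregone future gain), and then explicitly defer the sufficiency argument as your unresolved ``main obstacle.'' So the proposal is incomplete precisely at the step that distinguishes this lemma from bookkeeping.

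What is missing, and what the paper supplies, is a per-modification compensation argument; no ``signed-deficit potential'' or cross-cancellation between the favorable resets on $l \in M$ and the shortfalls on $l \notin M$ is needed or used (the favorable modifications --- reinserting packets \OPT transmitted but \ALG did not, and raising the corresponding deficits --- are advantageous on their own and are simply discarded as slack). Concretely: (i) when \OPT is forced to transmit now a packet of a link $l \notin M$ that \ALG served successfully, the loss relative to \OPT's preferred later transmission of that packet is at most $\maxpacks\MAXDEAD$, because the packet expires within $\MAXDEAD$ slots and the deficit can rise by at most $\maxpacks$ per slot --- this yields the $K\maxpacks\MAXDEAD$ term; (ii) when \OPT's deficit counter on such a link is reduced by one to match \ALG, each of \OPT's subsequent transmissions in the frame gains at most one less, and there are at most $F$ of them --- this yields the $KF$ term. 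These two bounds are exactly what make the credit of $w_l(t)$ per successfully served off-schedule link, plus $\EPSM = K(F + \maxpacks\MAXDEAD)$, satisfy \dref{eq:amorproperty}; without them the stated equality is an unverified definition and the lemma is unproven.
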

\begin{proof}
The main idea is similar to the one in \cite{tsanikidis2020power}, but we have to account for fading. Suppose \OPT attempts to transmit the earliest-deadline packets of links from schedule $\M{}$ whereas \ALG attempts the earliest-deadline packets from schedule $\M{\prime}$. 
We need to modify the state of \OPT, i.e., the buffers and deficits, so it is  identical with the state of \ALG. To achieve this, we allow \OPT to additionally transmit the packets of links  successfully transmitted by \ALG but not \OPT, i.e, $S_1=(\MT{\prime}\setminus \MT{}) \cap \ACTIVES{t}$. Transmitting such packets for \OPT might not be advantageous for its total gain as the weight of these packets can increase by $\maxpacks \MAXDEAD$ before they could be transmitted. Thus giving an additional reward $K\maxpacks \MAXDEAD$ to \OPT guarantees that the modification is advantageous. Further, we insert the packets transmitted successfully by \OPT but not \ALG, i.e., $S_2=(\MT{}\setminus\MT{\prime})\cap \ACTIVES{t}$, back to its buffers (which is advantageous for \OPT). Further to make the deficits identical, we increase the deficit counters of \OPT for links in $S_2$, which is advantageous for the total-gain within the frame. Additionally, we decrease the deficit counters of \OPT for the links in $S_1$. This change might not be advantageous for the total gain of \OPT, thus we give it extra reward for every possible subsequent transmission over links, which is at most $KF$. Hence the total additional compensation is $\EPSM=K(F+ \maxpacks \MAXDEAD)$

Following the above argument, the expected amortized gain of \OPT, when it selects $\M{}$, is bounded as
\begin{flalign}
& \ERF{\GADVE{\M{}}|J,\STATET} = 
\sum_{l \in \M{}} \FP_{l}(t) w_l(t) +    \nonumber \\
&\sum_{\M{\prime} \in \AMIS\setminus \{\M{}\}} \PI_{\M{\prime}}(t) \sum_{l \in \M{\prime} \setminus \M{}} \FP_{l}(t) w_l(t) +\EPSM  \nonumber \\
&\overset{(a)}{=} \W{}+
\sum_{l \not\in \M{}} w_l(t) \PHI_l \FP_{l}(t) +\EPSM \label{eq:intermed2}
\end{flalign}
where $(a)$ follows from definitions of $\PHI_l$ and $\W{}$. The inequality in the lemma's statement follows by noting that  
$$\mbox{\dref{eq:intermed2}}\leq \W{} +
\sum_{\M{\prime}\in \AMIS \setminus \{\M{}\}} \W{\prime} \PI_{\M{\prime}} + \EPSM,
$$
since $\sum_{l \in \M{\prime} \setminus \M{}} \FP_{l}(t) w_l(t)\leq  \W{\prime}$, and by using~\dref{eq:glag-generic2}.
%
\end{proof}

\subsection{Proof of Theorem \ref{thm:convert}: Conversion Result}
In the rest of proofs, for notional compactness, we define $\ETT{\cdot}:=\E{\cdot|\STATET,J}$ and $\EOT{\cdot}:=\E{\cdot|\STATET}$. Now consider a pattern $J\in \JF$. When $\NID{t}\geq W^\prime$, the amortized gain of $\mu^\star$, if it selects schedule $\M{}$, is bounded as
\begin{flalign}
\ERFTJ{\GADVE{\M{}}} & \overset{(a)}{\leq} \W{}(1-\PI_M) + \ERFT{\GALG(t)} + \EPSM  \nonumber\\
&\leq \W{\star} + \ERFT{\GALG(t)} + \EPSM  \nonumber \\
& = \ERFT{\GALG(t)}(\frac {\W{*}} { \ERFT{\GALG(t)} } + 1 ) + \EPSM  \nonumber\\
&\overset{(b)}{\leq} {\ERFT{\GALG}}  (1/\psi + 1) + \EPSM, \nonumber
\end{flalign}
where in $(a)$ we used  Lemma~\ref{lemmageneral-gain-analysis}, and in $(b)$ we used the main assumption that $\ALG$ obtains $\psi$ fraction of the maximum weight schedule.
This inequality does not depend on the particular choice $\M{}$.
Similarly in the case that $\NID{t} \leq W^\prime$, it can be seen from \dref{eq:galg-generic} and Lemma~\ref{lemmageneral-gain-analysis} that $\ERFTJ{\GADVE{\M{}}} \leq 2 W^\prime+\EPSM$. Consequently in either case, $\ERFTJ{\GADVE{\M{}}} \leq 2 W^\prime+\EPSM+\ERFT{\GALG(t)}(1/\psi + 1)$.
Applying Proposition~\ref{lemma:general-eff} with $\rho=(1/\psi + 1)^{-1}$, we obtain the result. 

\subsection{Analysis of \MPALGONAME: Proof of Theorem \ref{main-theorem-general}.} 
Using Lemma~\ref{lemmageneral-gain-analysis}, and probabilities of FAMIX-MS indicated in \dref{eq:mwisprob}, in both cases of $i\leq \nstar$ and $i> \nstar$, the amortized gain of \OPT can be bounded as
\begin{flalign*}
\ERFTJ{\GADVE{M_i}} &\leq  
\W{(i)} (1-\PI_{i}) +
\ERFT{\GALG(t)} + 
\EPSM   \\ &\leq  
\subhrm{\nstar}+
\ERFT{\GALG(t)} + 
\EPSM.
\end{flalign*}

%
%
\noindent
Thus regardless of the schedule selected by \OPT, we have
\[
\ERFTJ{\GADVA(t)} \leq  
\ERFT{\GALG(t)} \Big(
\frac { \subhrm{\nstar}} {
\ERFT{\GALG(t)}
} +1\Big) +  \EPSM
\]
Note that $\CAMISE \geq \CAMIS \geq \nstar$, hence it suffices to show that
\be
\frac { \subhrm{\nstar}} {
\ERFT{\GALG(t)}
} \leq 
\frac{\nstar-1}{\nstar} ,\label{eq:ineq-to-show}
\ee
as then by Proposition \ref{prop-approx}
it will follow that $\EFRT{\MPALGONAME} \geq (\frac{\CAMISE-1}{\CAMISE}+1)^{-1}= \GRATIO$. To show \dref{eq:ineq-to-show}, note that 
\ben
\frac { \subhrm{\nstar}} {
\ERFT{\GALG(t)}
} = \frac{\nstar \subhrm{\nstar} - (\nstar-1) \ERFT{\GALG(t)}}{
\nstar \ERFT{\GALG(t)}
} + \frac {\nstar-1}{\nstar}
\een
and $\nstar \subhrm{\nstar} - (\nstar-1) \ERFT{\GALG(t)}\leq 0$. This inequality holds because by \dref{eq:glag-generic2} and \dref{eq:mwisprob}, it follows that $\ERFT{\GALG(t)} =\sum_{j=1}^{\nstar}\W{(j)} - \nstar \subhrm{\nstar}$ and hence the inequality becomes 
\ben
&\subhrm{\nstar} \nstar^2  \leq 
(\nstar-1) \sum_{j=1}^{\nstar}\W{(j)},
\een
which holds by \dref{def:subhrm} and the arithmetic mean-harmonic mean inequality applied to $\{\W{(i)},i=1,...,\nstar\}$.

\subsection{Analysis of \ALGONAME: Proof of Theorem~\ref{main-theorem-collocated}}
We first state the following Lemma that allows us to focus on policies that transmit from non-dominated links.
\begin{lemma} \label{nd-lemma}
Given $J\in \JF$, let $\ADVR$ be the maximum-gain policy that transmits only from non-dominated links at any time (we refer to $\ADVR$ as max-gain ND-policy) and the maximum-gain policy $\OPT$ that can transmit any packet, then
\begin{flalign*}
\textstyle \E{\sumt \GADVND(t) | \STATEZ,J} \geq& \textstyle \E{\sumt \GADV(t) | \STATEZ,J} \\
& - (\maxpacks+1) F^2.
\end{flalign*}
\end{lemma}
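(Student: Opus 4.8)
The plan is to prove the bound by an exchange argument that reroutes every transmission of the unrestricted max-gain policy $\OPT$ onto a non-dominated link, while charging the incurred loss to the drift of the deficits across the frame. Since the network is collocated, $\ifgraph$ is complete and at most one link is served per slot, so (invoking the WLOG assumption that earliest-deadline packets are sent) $\OPT$ is specified by a map assigning to each slot $t\in\FRAME$ at most one served link. The first ingredient I would record is a deficit-drift estimate: for any link $l$ the per-slot change of $w_l$ lies in $[-1,\maxpacks]$, since $w_l$ grows by the deficit arrivals $\widetilde a_l(t)\le \maxpacks$ and falls by at most one upon a successful transmission; hence $|w_l(t)-w_l(t')|\le(\maxpacks+1)\,|t-t'|\le(\maxpacks+1)F$ for all $t,t'\in\FRAME$. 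This drift is the only source of loss: were the deficits constant over the frame, the rerouting below would be lossless.

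Next I would describe the rerouting step. Suppose $\OPT$ serves a dominated link $l_2$ at a slot $t$. By the recursive construction of the non-dominated set (add the largest-deficit nonempty link and discard the links it dominates), there is a non-dominated link $l_1$ dominating $l_2$ at time $t$, i.e.\ $e_{l_1}(t)\le e_{l_2}(t)$ and $w_{l_1}(t)\ge w_{l_2}(t)$. If $\OPT$ also serves the then-earliest packet of $l_1$ at a later slot $t'>t$, I swap the two service slots, serving $l_1$ at $t$ and $l_2$ at $t'$; this is feasible because $e_{l_1}(t)\le e_{l_2}(t)$ forces the deferred $l_2$-packet to be alive at $t'$ whenever $l_1$'s packet was (namely $t'\le t+e_{l_1}(t)-1\le t+e_{l_2}(t)-1$). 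Otherwise I simply substitute $l_1$ for $l_2$ at slot $t$. In both cases the modified schedule serves the non-dominated link $l_1$ in place of $l_2$.

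It then remains to bound the loss of one rerouting and the number of reroutings. Conditioned on $J$, the expected gain of serving link $l$ at slot $t$ is $q_l(t)\,w_l(t)=q\,w_l(t)$ with $q\le1$, so only deficits need to be compared. For a swap the change in conditional expected gain is $q\big([w_{l_1}(t)-w_{l_1}(t')]-[w_{l_2}(t)-w_{l_2}(t')]\big)$, of magnitude at most $2q(\maxpacks+1)F$ by the drift estimate; for a substitution the immediate term rises since $w_{l_1}(t)\ge w_{l_2}(t)$, and only downstream deficits move, again by at most $(\maxpacks+1)F$. Since at most $F$ slots are served and, by the termination argument discussed below, only $O(F)$ corrections are needed, summing the per-correction loss gives a total loss of order $(\maxpacks+1)F^2$; taking expectations over the channel realizations yields $\E{\sumt\GADVND(t)\mid\STATEZ,J}\ge\E{\sumt\GADV(t)\mid\STATEZ,J}-(\maxpacks+1)F^2$.

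The hard part is the bookkeeping that makes this iterative procedure terminate with a bounded number of corrections, because domination is defined by the \emph{time-varying} deficits and earliest deadlines, and a swap that cures slot $t$ may create a freshly dominated transmission at the slot $t'$ to which $l_2$ was moved. The clean way to resolve this is to run the whole comparison against the frame-start deficits $w_l(t_0)$: with frozen deficits every swap is exactly gain-neutral and every substitution is gain-nondecreasing, so a frozen-non-dominated schedule at least as good as $\OPT$'s frozen schedule exists outright with no termination issue, and the two passages between actual and frozen gains — for $\OPT$ and for $\ADVR$ — are each absorbed into the $(\maxpacks+1)F^2$ budget via the drift estimate. One then checks that this frozen-optimal non-dominated schedule is realizable by the actual max-gain ND-policy $\ADVR$ up to the same order of error, and that fading enters only through the common factor $q\le1$, which shrinks rather than enlarges the deficit gaps exploited above.
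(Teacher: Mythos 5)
Your overall architecture --- freeze the deficits at their frame-start values, argue that restricting to non-dominated links is lossless in the frozen world, then pay for unfreezing via a drift bound --- is exactly the paper's. The genuine gap is in how you execute the frozen-world step: you assume ``$\OPT$ is specified by a map assigning to each slot at most one served link'' and then reroute that map by swaps and substitutions. Under fading this premise is false: even conditioned on the pattern $J$, the channel realizations $\IF_l(t)$ are random, a failed transmission leaves its packet in the buffer, and the max-gain policy is genuinely adaptive to these outcomes. For example, take two collocated links with $\FP=1/2$, each holding one deadline-$2$ packet, with deficits $10$ and $1$: the best fixed slot-to-link map earns $10\cdot\tfrac{3}{4}=7.5$ in expectation, while the adaptive optimum earns $7.75$ (retry link $1$ if it fails, switch to link $2$ if it succeeds). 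So your exchange argument compares $\ADVR$ against the best \emph{nonadaptive} schedule, a strictly weaker benchmark than $\OPT$; moreover, the swap operation is not even well defined against the true $\OPT$, since ``$\OPT$ serves $l_1$'s packet at a later slot $t'$'' is a random event depending on future channel outcomes, and performing the exchange realization-by-realization produces a clairvoyant object, not a causal policy. This adaptivity is precisely what the paper's proof is built to handle: it compares the value functions of scheduling a dominating packet $\PA_u$ versus a dominated packet $\PA_v$, converts the optimal continuation $\mu$ after serving $\PA_v$ into a continuation $\mu'$ after serving $\PA_u$ by replaying every attempt of $\PA_u$ as an attempt of $\PA_v$, and couples the channel outcomes so that $C^{\mu}_u(t)$ equals $C^{\mu'}_v(t)$ --- a step that is legitimate only because $q_u=q_v=\FP$. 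Your closing remark that ``fading enters only through the common factor $q\le 1$'' misses this point: the difficulty is adaptivity and the coupling of outcomes across the two policies, not the magnitude of $\FP$.

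There is also a bookkeeping problem with the constant. You apply the symmetric drift bound $|w_l(t)-w_l(t')|\le(\maxpacks+1)F$ to both passages (unfreezing for $\OPT$ and unfreezing for $\ADVR$), which yields a total error of $2(\maxpacks+1)F^2$, not the $(\maxpacks+1)F^2$ claimed in Lemma~\ref{nd-lemma}. The paper obtains the stated constant from the asymmetry you yourself record but do not use: deficits rise by at most $\maxpacks$ per slot and fall by at most $1$ per slot, so passing the unrestricted optimum from varying to frozen deficits costs at most $\maxpacks F$ per transmission (total $\maxpacks F^2$), while passing the frozen ND-optimum back to varying deficits costs at most $F$ per transmission (total $F^2$). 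This weaker constant would be harmless downstream, since Proposition~\ref{prop-approx} only needs an error term with $\E{F\EPSG}<\infty$, but it does not prove the statement as written. Finally, the step you defer with ``one then checks'' --- that the frozen-world non-dominated schedule can be tracked by $\ADVR$, whose notion of domination uses the actual time-varying deficits --- is asserted rather than proved; in fairness, this identification of the two notions of non-domination is also passed over silently in the paper's own step (b).
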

\begin{proof}
First we consider the case where deficits do not vary over time regardless of arrivals or transmissions. We argue that in this case transmitting from a non-dominated link has always higher total expected gain from any state considering the remaining frame for any pattern $J$. 
Let  $\VAL{l}{\BSTATE{}}{t}$ denote the maximum expected gain over the remaining frame for the given pattern $J$, if at the current time slot $t\in \FRAME$ with buffers $\BSTATE{}$, link $l$ is scheduled. Further, define $\VAL{}{\BSTATE{}}{t} = \MAX{l} {\VAL{l}{\BSTATE{}}{t}}$. Finally define $\VAL{ND}{S}{t}$ to be the maximum expected gain for the remaining frame over policies that only schedule non-dominated links. Consider the earliest-deadline packets $\PA_u,\PA_v$ of links $u,v$, with $u$ dominating $v$. We argue that transmitting  $\PA_u$ yields a higher expected gain than that of $\PA_v$. If $\PA_u$ is scheduled we have:
\begin{flalign*}
\VAL{u}{\BSTATEE}{t}  = &
q (w_u + \VAL{}{\BSTATEE^\prime_{(\PA_u)}  }{t+1}) \\ 
&+(1-q) \VAL{}{\BSTATEE^\prime}{t+1},
\end{flalign*}
where $\BSTATEE^\prime$ are the updated buffers due to regular buffer dynamics but ignoring the scheduled packet by our policy, and $\BSTATEE^\prime_{(\PA)}$ indicates buffer $\BSTATEE^\prime$ without packet $\PA$. Similar expression can be written if $\PA_v$ is scheduled. Hence, to show $\VAL{u}{\BSTATEE}{t} \geq 
\VAL{v}{\BSTATEE}{t}$, equivalently we can show
\be \label{eq:intermediate}
w_u + \VAL{}{\BSTATEE^\prime_{(\PA_u)}}{t+1}
\geq w_v + \VAL{}{\BSTATEE^\prime_{(\PA_v)}}{t+1}.
\ee
Now note that by the domination definition, the deadline of packet $v$ is at least as long as that of $u$, hence the optimal policy $\mu$ for
$\VAL{}{\BSTATEE^\prime_{(\PA_v)}}{t+1}$ that can attempt packet $u$ can be used to construct a policy $\mu^\prime$ for 
$\VAL{}{\BSTATEE^\prime_{(\PA_u)}}{t+1}$ that attempts $v$ instead of $u$ whenever $\mu$ would have scheduled $\PA_u$. This is possible since $\PA_v$ does not expire before $\PA_u$. 
Now consider a coupling for the channel outcomes in $\mu$ and $\mu^\prime$ such that channel $C^\mu_u(t)$ under $\mu$ is identical to $C^{\mu^\prime}_v(t)$ under $\mu^\prime$. This does not affect the marginal success probability of links as both links $u,v$ have the same probability of success $\FP$. Further each success when transmitting $v$ under $\mu^\prime$ results in reward $w_v$, instead of $w_u$ under $\mu$, therefore the gain of policy $\mu^\prime$ is stochastically larger than that of policy $\mu$, and hence in the expectation, 
\[
\VAL{}{\BSTATEE^\prime_{(\PA_u)}}{t+1}
\geq (w_v-w_u) + \VAL{}{\BSTATEE^\prime_{(\PA_v)}}{t+1}
\]
Thus in the fixed-deficit case we have shown that $\VAL{ND}{\BSTATEE}{t}\geq \VAL{}{\BSTATEE}{t}$.

Now consider the case with time-varying deficits, and let $\VALV{\bcdot}{\BSTATEE}{t}$ denote the corresponding quantities. In this notation $\VALV{}{\BSTATE{t_0}}{t_0} = \E{\sumt \GADV(t)|J,\STATEZ}$. If pattern $J$ has length $F$, then (a): 
$\VAL{}{\BSTATEE}{t} \geq \VALV{}{\BSTATEE}{t}-\maxpacks F^2$ since any transmission under any policy in the varying-deficit case will yield a gain of at most $\maxpacks F$ higher than the fixed-deficit case, since the deficit cannot increase more than $\maxpacks F$ during a frame, and we can have at most $F$ such transmissions. Similarly we obtain (b): $\VALV{ND}{\BSTATEE}{t}+ F^2 \geq \VAL{ND}{\BSTATEE}{t}$, as in the time-varying deficit case every transmission can result in at most $F$ gain less than the fixed-deficit case, and we can have at most $F$ such transmissions. 
By (a) and (b), we obtain
\ben
\VALV{ND}{\BSTATEE}{t} \geq \VALV{}{\BSTATEE}{t} - (\maxpacks+1) F^2.
\label{eq:vnd-vs-v}
\een 
\end{proof}
\textit{Proof of Theorem \ref{main-theorem-collocated}.}
In view of Lemma~\ref{nd-lemma}, we perform an amortized gain analysis for \ALGONAME in comparison with ND-policies. 
Suppose \ALGONAME\ decides to schedule the earliest-deadline packet $\PA_f = (w_f,e_f)$ of link $f$ and ND-policy $\ADVR$ transmits a packet $\PA_z = (w_z,e_z)$ from a different non-dominated link $z$ ($z\not=f$). The state of $\ADVR$ and \ALGONAME will be different
in the following four cases, 
\begin{enumerate}[leftmargin=*]
\item $\ACTIVELINK{f}{t}, \INACTIVELINK{z}{t}$: In this case,
we make the buffers of the two identical by allowing $\ADVR$ to also transmit $\PA_f$, and give extra reward $\maxpacks \MAXDEAD$. Further we reduce the deficit in $\ADVR$ for link $f$ by $1$ and we give reward $F$ to $\ADVR$ similarly as in the proof of  Lemma~\ref{lemmageneral-gain-analysis}.
\item $\INACTIVELINK{f}{t}, \ACTIVELINK{z}{t}$: We replace $\PA_z$ in the buffer of $\ADVR$ and increase the deficit of link $z$ in $\ADVR$ which are advantageous for $\ADVR$.
\item $e_f \leq e_z, w_f \leq w_z,$ and $\ACTIVELINK{f}{t}, \ACTIVELINK{z}{t}$: We replace $\PA_f$ from link $f$ with $\PA_z$ in link $z$ in buffers of $\ADVR$. Packet $\PA_z$ has higher deadline and higher weight at time $t$. 
Since both packets will expire in at most $\MAXDEAD$ slots, the deficit of $f$ can only increase by at most $\MAXDEAD \maxpacks$ before $\PA_f$ expires, whereas the deficit of $z$ can decrease by at most $\MAXDEAD$. Therefore giving $\ADVR$ additional compensation of $(1+\maxpacks) \MAXDEAD$ will guarantee that the modification is advantageous. 
Further, we decrease the deficit of link $f$ by one ($w_f-1$ in $\ADVR$) and we increase the deficit of link $z$ by one ($w_z+1$ in $\ADVR$). 
     To compensate for the decrease in deficit of link $f$ we give $\ADVR$ extra gain of $F$. Hence, the total compensation is bounded by   $F+(a_{max}+1)\MAXDEAD.$

\item $e_z \leq e_f, w_z \leq w_f$ and $\ACTIVELINK{f}{t}, \ACTIVELINK{z}{t}$: In this case, we allow $\ADVR$ to additionally transmit packet $\PA_f$ at time $t$ and give extra reward $\maxpacks \MAXDEAD$, and inject a copy of packet $\PA_z$ to the buffer of link $z$. This makes the buffers identical, but results in the decrease of deficit of link $f$ by one.
    To guarantee the change is advantageous for $\ADVR$ considering the rest of the frame, we give it extra reward $F$.
\end{enumerate}
In all cases, the additional compensation is bounded by $\EPSC=F+(\maxpacks+1)\MAXDEAD$. Let $\FP_{ij}=\Pr{(C_i(t)=1,C_j(t)=1)}\geq q^2$, or equivalently $\FP_{ij}=q^2+\epsilon_{ij}$ for some $\epsilon_{ij}\geq0$, by the positive correlation assumption, and hence $\Pr{(C_i(t)=1,C_j(t)=0)}= \FP- \FP_{ij} = q(1-q)-\epsilon_{ij}$. Then,
\ben
\ETT{\GADVEND{i}} \leq {\FP{} w_i}
+
\sum_{j} \pi_j (\FP-\FP_{ij}) w_j
+ 
{\sum_{j < i } w_j\pi_j \FP_{ij}}
\\+ \EPSC =  
{\FP{} w_i}
+
(1-q) \sum_{j} \pi_j \FP w_j
+ 
q{\sum_{j < i } w_j\pi_j q}
\\
-\sum_{j} \pi_j \epsilon_{ij} w_j 
+\sum_{j<i} \pi_j \epsilon_{ij} w_j 
+\EPSC 
\\
\leq \FP{} [w_i + (1-\FP{})\sum_{j} \pi_j w_j  + q \sum_{j < i} w_j \pi_j ] + \EPSC
\een
Now notice that for 
$\pi_i \leq \frac 1 {\FP{}} (1- \frac{w_{i+1}}{w_i})$ the right-hand-side of the above inequality is maximized for $i=1$. This can be verified by computing the difference of the gains for choices $i,i+1$. Hence, we have
\be
\ETT{\GADVAND(t)} \leq 
\max_i \ETT{\GADVEND{i}} = 
\ETT{\GADVEND{1}} \nonumber\\=
\FP{} w_1 + (1-\FP{}) \EOT{\GALG(t)} + \EPSC. \label{eq:boundadvc}
\ee
Let $\nstar$ be the number of links with positive probability in \dref{eq:ndprob}. For the gain of \ALGONAME, we have 
\begin{flalign*}
&\EOT{\GALG(t)} =
\sum_{i=1}^{\nstar} q \pi_i w_i  \\
&= \sum_{i=1}^{\nstar-1} (w_i - w_{i+1}) +
 \FP{} (1 - \sum_{i=1}^{\nstar-1} \pi_i)  w_{\nstar}\\
& =w_1+ w_{\nstar} (-1  
+ \FP{} - \FP{}\sum_{i=1}^{\nstar-1} \pi_i) 
\\
& \overset{(a)}{=}w_1 \left(
1- 
(1  
- \FP{} + \FP{}\sum_{i=1}^{\nstar-1} \pi_i )
\prod_{i=1}^{\nstar-1} (1-\pi_i \FP{})
\right)  
\\
& \overset{(b)}{\geq} w_1 (1-\left(\frac{\nstar - \FP{}}{\nstar}\right)^{\nstar}) \geq  w_1 (1-e^{-\FP{}}),
\end{flalign*}
where in $(a)$, by the definition of $\pi_i$ in \dref{eq:ndprob}, we used: 
$w_{\nstar}= w_1 \prod_{i=1}^{\nstar-1} (1-\pi_i \FP{})$. In $(b)$ we used the geometric-arithmetic inequality. 
Using the above relation and \dref{eq:boundadvc} we get 
\[
\ETT{\GADVAND(t)}
\leq 
\EOT{\GALG(t)} (
\frac{q}{1-e^{-q}} + (1-q)) + \EPSC
\]
Using similar arguments as in the proof of Proposition \ref{lemma:general-eff}, we can sum over the entire frame to obtain
\begin{flalign*}
&\textstyle \E{\sumt \GADVAND(t)|J,\STATEZ} \leq \\
&\textstyle \E{\sumt \GALG(t)|J,\STATEZ}
(\frac{q}{1-e^{-q}} + 1-q)
+F \EPSC.
\end{flalign*}
Due to the amortized analysis, \dref{eq:amorproperty} holds. Then by  
using Lemma~\ref{nd-lemma} we obtain
\ben
\textstyle \E{\sumt \GADV(t)|J,\STATEZ} - F^2 (\maxpacks+1) \leq  \\
\textstyle \E{\sumt \GALG(t)|J,\STATEZ}
(\frac{\FP{}}{1-e^{-\FP{}}} + 1-\FP{})
+F \EPSC.
\een
Let $\rho = ({\frac{\FP{}}{1-e^{-\FP{}}} + 1-\FP{}})^{-1}$. After taking expectation with respect to the randomness of the frame, rearranging terms and dividing appropriately, we obtain
\ben
\rho 
 \leq 
\frac{ \E{\sumt \GALG(t)|\STATEZ} }{
\E{\sumt \GADV(t) | \STATEZ}
}
+ \frac{ \E{F \EPSC + F^2 (\maxpacks+1)} }{
\E{\sumt \GADV(t) |\STATEZ} } \rho
\een
The proof is concluded as in the proof of Proposition \ref{lemma:general-eff}, by 
$\lim_{\NID{t_0}\rightarrow \infty}\E{\sumt \GADV(t) | \STATEZ} = \ \infty$, and applying Proposition~\ref{prop}.

\subsection{Proof of Proposition \ref{coloring-proposition}: Low-Complexity Variants}

Suppose that under \ALG
we have 
\be \label{eq:equivineq}
 \psi(t) \max_{M \in \AMIS} \ETT{\GADVEM{}}\leq \max_{M \in \AMIS_0(t)} {\ETT {\GADVEM{}}}.
\ee
Since \ALG randomizes according to the probabilities of \MPALGONAME over the maximal schedules in $\AMIS_0(t)$, we have 
\be \label{eq:xiieq}
\xi(t) \max_{M \in \AMIS_0(t)} \{ \ETT{\GADVEM{}} \} \leq { \EOT{\GALG(t)}}+\EPS,
\ee
where $\xi(t) := \frac {|\AMIS_0(t)|}{2 |\AMIS_0(t)|-1}$, by using the same arguments as in the proof of Theorem \ref{main-theorem-general} applied to $\AMIS_0(t)$. By \dref{eq:equivineq} and \dref{eq:xiieq}, 
$$
\MIN{t\geq 0} { \xi(t) \psi(t)} \ETT {\GADVA(t) } \leq  \EOT {\GALG(t)}+\EPS.
$$
Then by Proposition \ref{prop-approx}, it follows that $\gamma \geq
\MIN{t\geq 0} {\psi(t) 
 \xi(t) 
}. 
$
Hence to conclude the proof, we simply need to argue that \dref{eq:colortricksuffnew} implies \dref{eq:equivineq} 
Using Lemma~\ref{lemmageneral-gain-analysis}, and \dref{eq:galg-generic}, it can be shown that
\ben
\ERFTJ{\GADVE{\M{}}}=
\WE{} \bar \PHI_l +
\ERFT{\GALG(t)}
+\EPSM \label{eq:clr-equiv}
\een
Plugging the above expression in \dref{eq:equivineq}, it can be seen that
\dref{eq:colortricksuffnew} is a sufficient condition for \dref{eq:equivineq} to hold. 

\subsection{Proof of Corollary~\ref{myopic-generalresult}: Myopic Distributed Algorithm} \label{sec:unif-analysis}
The amortized gain can be written as
\begin{flalign}
&\ETT{\GADVEM{}} =\sum_{l \in \MT{}} \FP_l(t) w_l(t)  \bar{\PHI}_l + \E{\GALG(t)}+\EPSM \nonumber\\ 
&\overset{(a)}{=} \ETT{\GALG(t)} ( 
\frac { \sum_{l \in \MT{}} \FP_l(t) w_l(t) \bar{\PHI}_l } { \sum_{l} \PHI_l \FP_l(t) w_l(t) } +1)+ \EPSM  \label{eq:distr-factored}
\end{flalign} 
where in (a), we used \dref{eq:galg-generic}. 

Let $A_l=\bigcap_{z \in N(l)}\{T_l < T_z \}$ be  the event that the timer of $l$ expires earlier than that of its neighbors.
 We can obtain a bound on the probability of link getting scheduled as follows: 
\be
\phi_l & \geq & 
\Pr{[A_l \cap \{T_l < T_C\}]} = \Pr{[A_l]}\Pr{[T_l < T_C|A_l]} \nonumber \\
&\overset{(a)}{\geq} & 
 \frac{\nu_l }{\nu_l + \sum_{z \in N(z)} \nu_{z}} (
1 - e^{-(\nu_l + \sum_{z \in N(z)} \nu_{z})T_C}) \nonumber  \\
&\geq &  \frac{\nu_l }{\nu_l + \sum_{z \in N(z)} \nu_{z}} (
1 - e^{-\nu_l T_C}) \nonumber  \\
&\overset{(b)}{\geq} &  \frac{\nu_l }{\nu_l + \sum_{z \in N(z)} \nu_{z}} (
1 - \delta), \label{eq:phimyopic}
\ee
where (a) follows from the fact that $T_l|A_l \sim Exp(\nu_l + \sum_{z \in N(z)} \nu_{z})$ (easy to verify),  and (b) from the choice of $T_C$.
Hence, using~\dref{eq:phimyopic}, in the case that $\nu_l=\nu$, we get $\PHI_l \geq \frac {1-\delta}{1+\Delta}$. Hence,
\ben
\frac { \sum_{l \in \MT{}} \FP_l(t) w_l(t) ( 1- \PHI_l) } { \sum_{l} \PHI_l \FP_l(t) w_l(t) } \leq  
\frac { \sum_{l \in \MT{}} \FP_l(t) w_l(t) ( 1- \frac{1-\delta}{\Delta+1}) } { \sum_{l} \FP_l(t) w_l(t) \frac{1-\delta}{\Delta+1} }  \\ 
\leq \frac { \sum_{l \in \MT{}} \FP_l(t) w_l(t) ( 1- \frac{1-\delta}{\Delta+1}) } { \sum_{l \in \MT{}} \FP_l(t) w_l(t) \frac{1-\delta}{\Delta+1}} 
\leq
\frac{ \Delta-\delta } {1-\delta},
\een
where the first inequality is due to the fact that the ratio is a decreasing function of $\phi_l$, thus using \dref{eq:distr-factored}, 
\[
\ETT{ \GADVE{M}} \leq \ETT{\GALG(t)} (\frac{ \Delta - \delta} {1-\delta}+1) + \EPSM ,
\]
where $\delta$ can be made arbitrarily small. Using Proposition \ref{prop-approx}, we obtain the final result.

\section{Simulation Results}
We performed simulations under several networks and traffic-fading scenarios. Our algorithms \ALGONAME and \MPALGONAME can considerably outperform \GMS and \MWS. We present two of the simulations here due to space constraint.

First, we consider the traffic of Pattern B of Figure~\ref{fig-traffic-fading}, but with  i.i.d. channel  success probability $q$. The results are shown in Figure~\ref{fig:deterministic-collocated} for $q=0.6$ and equal target delivery ratio $p$ for links. Note that the optimal cannot achieve $p>0.6$, hence \ALGONAME is near optimal. We observed a similar behavior for other values of $q$ and different number of links. 

Next, consider a network with 5 links and $G_I$ with edges $\{(l_1,l_2),(l_2,l_3),(l_2,l_4),(l_4,l_5)\}$. The traffic-fading process for $\{l_1,l_3,l_4\}$ is as in link 2 of Pattern B in Figure~\ref{fig-traffic-fading}, and for links $\{l_2,l_5\}$ is as in link 1. We set an equal target delivery ratio $p$ for all the links. Figure~\ref{fig:simple-graph-sub2-general} shows the results. We see \MPALGONAME can support a significantly higher delivery ratio than other algorithms. In this case, optimal cannot achieve $p>0.75$.

Next, consider a simple network with links $\USERS=\{l_1,l_2,l_3\}$ and edges $\{(l_1,l_2),(l_1,l_3)\}$ in the interference graph $G_I$ with Pattern A from Figure~\ref{fig-traffic-fading}.
Then \GMS becomes unstable for  $\mathbf{p}=(\frac{1}{3}+\epsilon,\frac{1}{2}+\epsilon,\frac{1}{2}+\epsilon)$, whereas the optimal can satisfy $\mathbf{p}=(\frac {11}{12}, \frac{5}{6},\frac{5}{6})$.
\NEW{This example illustrates the existence of simple examples with non-equal delivery ratio requirements where \GMS performs poorly. Similar examples were constructed for more links.}


\begin{figure}[t]
\centering
\begin{subfigure}{.25\textwidth}
  \centering
  \includegraphics[width=0.95\linewidth]{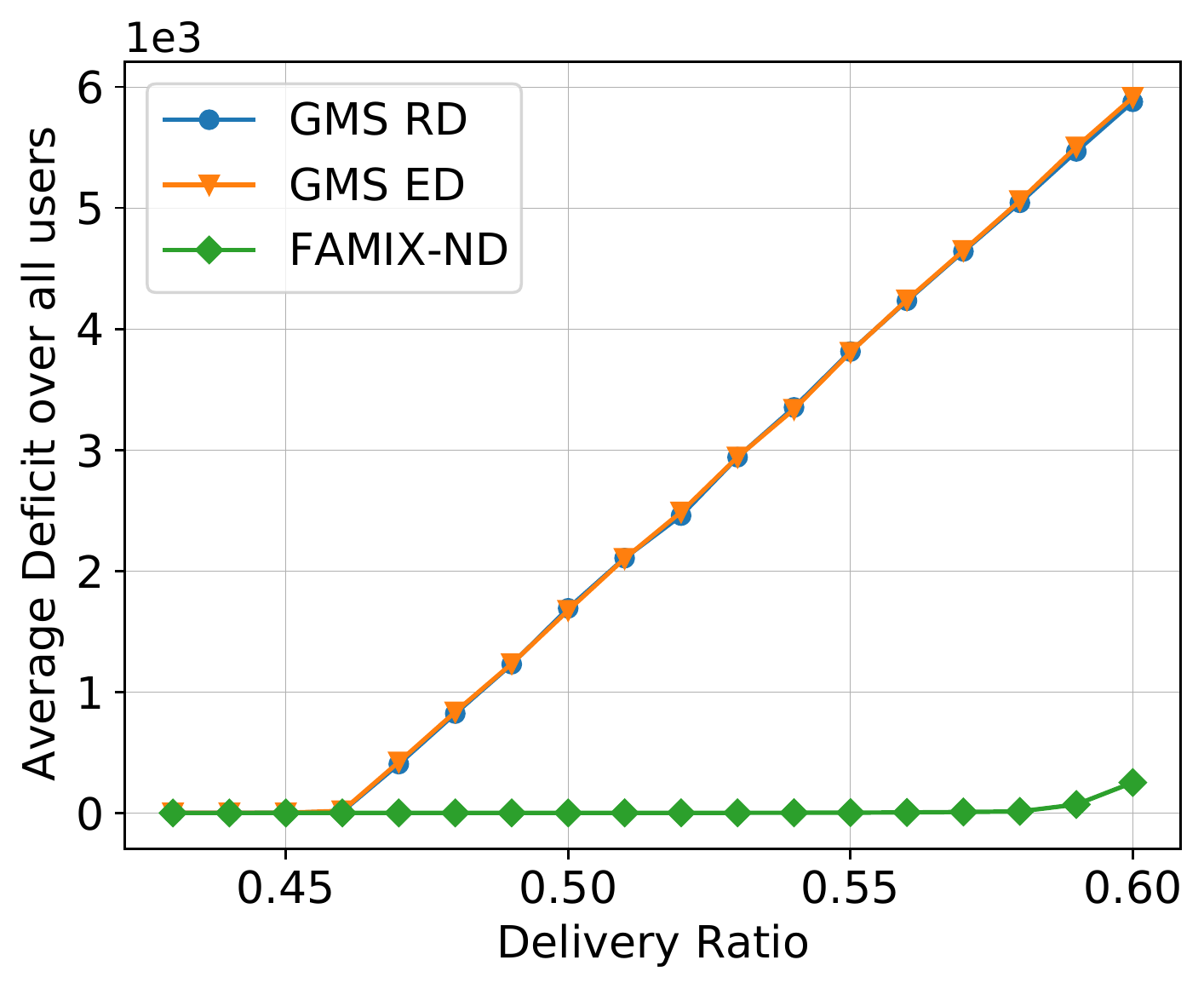}
  \caption{2-link collocated network.}
  \label{fig:deterministic-collocated}
\end{subfigure}%
\begin{subfigure}{.25\textwidth}
  \centering
  \includegraphics[width=0.95\linewidth]{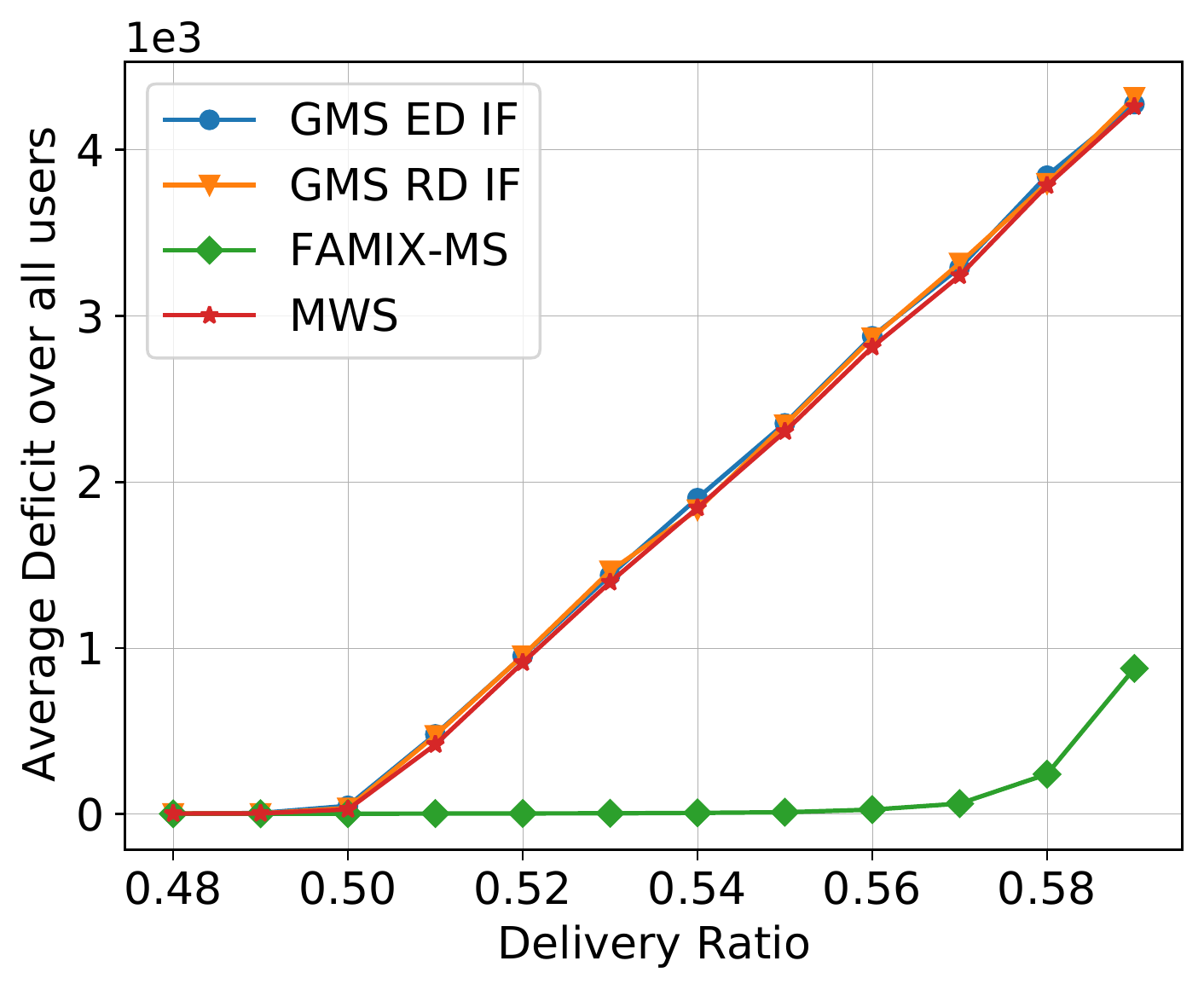}
  \caption{5-link sparse network}
  \label{fig:simple-graph-sub2-general}
\end{subfigure}
\caption{Performance comparison of various algorithms.}
\vspace{- 0.1 in}
\end{figure}

%
\section{Conclusion}
We considered scheduling of real-time traffic over fading channels, where traffic (arrival and deadline) and links' reliability evolves as an \textit{unknown} finite-state Markov chain. We provided a conversion result that shows classical non-real-time scheduling algorithms like \MWS and \GMS can be ported to this setting and characterized their efficiency ratio. We then extended the randomized algorithms from \cite{tsanikidis2020power} to fading channels. We further proposed low-complexity and myopic distributed randomized algorithms and characterized their efficiency ratio. Investigating more efficient low-complexity  and distributed randomized algorithms could be an interesting future research.

\newpage
\bibliographystyle{IEEEtran}
\bibliography{citations}

\end{document}